\documentclass[pra, singlecolumn, superscriptaddress]{revtex4-2}

\usepackage[utf8]{inputenc}
\usepackage[T1]{fontenc}

\usepackage{amsmath,mathtools}
\usepackage{amsfonts}
\usepackage{amssymb}
\usepackage{amsthm}
\usepackage{bm}
\usepackage{dsfont}
\usepackage{physics}

\newtheorem{lemma}{Lemma}
\newtheorem{proposition}{Proposition}

\newtheorem{remark}{Remark}

\usepackage{graphicx}
\usepackage{subfigure}
\usepackage{booktabs}
\usepackage{dcolumn}
\usepackage[labelfont=bf, justification=Justified, format=plain]{caption}

\usepackage{cases}
\usepackage{xcolor}
\usepackage[normalem]{ulem}
\usepackage{tcolorbox}

\definecolor{mygreen}{rgb}{0.01, 0.31, 0.59}
\definecolor{myblue}{rgb}{0.01, 0.31, 0.59}
\usepackage[colorlinks=true]{hyperref}
\hypersetup{linkcolor=myblue, citecolor=mygreen, urlcolor=myblue}

\newcommand{\QFI}{\mathcal{F}}
\newcommand{\CFI}{F}

\newcommand{\sympform}{\mathbf{J}}       
\newcommand{\Lind}{\mathcal{L}}
\newcommand{\diss}[1]{\mathcal{D}[#1]}
\newcommand{\osc}{\Omega}               
\newcommand{\oscR}{\Omega_{\!R}}        
\newcommand{\effW}{W}                   

\begin{document}

\title{Non-equilibrium quantum thermometry with bosonic samples}

\author{Marek Winczewski}
\email{Marek.Winczewski@ug.edu.pl}
\affiliation{%
  Institute of Informatics,
  Faculty of Mathematics, Physics and Informatics,
  University of Gda\'nsk,
  ul.~prof.~Marii Janion~7,
  80-309~Gda\'nsk, Poland}

\author{Micha{\l} Horodecki}
\affiliation{%
  International Centre for Theory of Quantum Technologies (ICTQT),
  University of Gda\'nsk,
  ul.~Wita Stwosza~63,
  80-308~Gda\'nsk, Poland}

\author{Ricard Ravell Rodr\'{\i}guez}
\email{ricard.ra.ro@gmail.com}
\affiliation{%
  ICFO-Institut de Ci\`encies Fot\`oniques,
  The Barcelona Institute of Science and Technology,
  08860~Castelldefels (Barcelona), Spain}

\date{\today}

\begin{abstract}
We study low-temperature non-equilibrium quantum thermometry with a bosonic probe: a quantum
harmonic oscillator strongly coupled to a bosonic bath at temperature $T$ through a
Drude--Ohmic spectral density. We treat the probe--bath dynamics both exactly,
using the quadratic solution of Boyanovsky and Jasnow, and within a renormalized
Gorini--Kossakowski--Lindblad--Sudarshan (GKLS) master equation. From the
time-dependent covariance matrix we extract the quantum Fisher information (QFI)
for general single-mode Gaussian probe states, including squeezed ones. In the
strong-coupling, non-Markovian regime the QFI is non-monotonic in time, displaying
bath-memory revivals that make a finite interrogation time $t^*>0$ strictly
optimal. By contrast, we prove that the Markovian QFI rises monotonically to its
stationary value and develops no interior optimum, so that its optimum is always
pinned to the boundary $t^*\to\infty$; this complements existing Markovian precision-rate bounds, which concern $(\mathcal F(t)/t)$ rather than the single-shot QFI
$(\mathcal F(t))$. Squeezed initial states yield a large transient advantage that
thermalisation eventually erases, establishing squeezing and interrogation time as
complementary thermometric resources. At equilibrium, strong coupling replaces the
exponential Boltzmann suppression of the low-temperature relative error by a far
milder polynomial divergence. As the model maps directly onto circuit quantum
electrodynamics, these protocols appear within current experimental reach.
\end{abstract}

\maketitle

\section{Introduction}
\label{sec:intro}

Estimating the temperature of a physical system is one of the most fundamental
metrological tasks in nature, and becomes particularly demanding when the system of
interest operates deep in the quantum regime~\cite{Mehboudi2019}.
In this context, the precision of any temperature estimate is ultimately bounded
by the laws of quantum mechanics, and the growing field of quantum thermometry is
devoted to characterising these fundamental limits and devising protocols that
approach them~\cite{Mehboudi2019,Paris2009,BraunsteinCaves1994}.
The standard approach casts thermometry as a quantum parameter-estimation
problem~\cite{Paris2009}: a small quantum system, the probe, is prepared in a
suitable state, brought into contact with the sample at temperature $T$, and
subsequently measured; from the measurement outcomes one constructs an estimator
$\tilde T$ whose variance is lower-bounded by the quantum Cram\'er--Rao
inequality~\cite{Cramer1999,Rao1992,BraunsteinCaves1994}
\begin{align}
  \operatorname{Var}(\tilde T) \;\geq\; \frac{1}{\mathcal{F}[\rho_T]},
\end{align}
where $\mathcal{F}[\rho_T]$ is the quantum Fisher information (QFI) of the probe
state $\rho_T$~\cite{Paris2009,BraunsteinCaves1994}.
Maximising the QFI over all measurements and initial probe states therefore sets
the ultimate thermometric precision achievable within a given protocol.

The regime of equilibrium thermometry, in which the probe is measured only after
it has fully thermalised with the sample, is by now theoretically well understood.
The optimal probe structure has been characterised~\cite{Correa2015}, the
low-temperature behaviour has been analysed~\cite{HovhannisyanCorrea2018,Potts2019},
and the role of the system--bath coupling strength has received growing attention.
Especially relevant to the present work are results for bosonic probes coupled to
a bosonic bath through the Caldeira--Leggett model~\cite{CaldeiraLeggett1983}:
moving away from the weak-coupling regime was shown to substantially enhance the
equilibrium QFI at low temperatures~\cite{Correa2017}, with bath-induced
correlations providing an additional resource~\cite{Planella2022} that survives
even when one restricts to energy measurements~\cite{Glatthard2023}.

Beyond equilibrium, non-equilibrium (transient) thermometry considers protocols in
which the probe is measured at a finite interaction time $t$, before
thermalisation is complete.
This regime is richer, but also more demanding, since the precision now depends on
the full non-equilibrium dynamics of the probe, governed by its open-system
coupling to the bath.
In the Markovian limit, fundamental bounds on the achievable precision rate
$\mathcal{F}/t$ have been established~\cite{SekatskiPerarnau2022}, with the optimal
protocol corresponding to a vanishing interrogation time, $t^*\to 0$.
The picture changes dramatically once non-Markovian effects enter.
In the companion paper~\cite{RavellRodriguez2024}, which treated a fermionic probe
strongly coupled to a fermionic bath, the QFI was found to behave non-monotonically
in time, displaying non-Markovian revivals that can be exploited to exceed the
Markovian bound, so that the optimal interrogation time becomes finite, $t^*>0$.
Closer to our setup, Ref.~\cite{Porto2025} studied non-equilibrium thermometry of
a bosonic probe in the quantum Brownian motion framework, albeit at weak coupling
and in specific Ohmic limits.
The advantage of strong coupling is, however, not unconditional.
A recent study of a qubit probe~\cite{Tan2025} finds that, while strong coupling
does lower the steady-state error in the ultralow-temperature limit, weak coupling
can be preferable for the transient signal-to-noise ratio at moderately low
temperatures, so that the regime in which strong coupling actually helps has to be
identified rather than assumed.
The interplay of strong coupling and non-Markovianity in the bosonic case---where
the probe additionally admits squeezed preparations that have no qubit
analogue---thus remains largely open, and is the focus of the present work.

A feature of bosonic probes that has no counterpart for fermionic ones is the
infinite-dimensional Hilbert space, which allows the probe to be initialised in
non-classical states such as squeezed states.
Gaussian states and their metrological properties have been studied extensively in
quantum optics and continuous-variable quantum
information~\cite{Weedbrook2012,Serafini2017,Ferraro2005}, with general expressions
for the QFI in terms of the covariance matrix derived in
Refs.~\cite{Monras2013,Safranek2018,Pinel2013}.
In the thermometry context, Ref.~\cite{Cenni2022} established tight bounds for
Gaussian probe states measured with Gaussian measurements, and
Ref.~\cite{Mirkhalaf2024} showed that non-classical squeezed initial states can
provide a substantial metrological advantage in non-equilibrium Gaussian
thermometry.
Together, these results motivate a unified study of the interplay between
initial-state squeezing, non-Markovian dynamics, and strong coupling.

This is the programme we carry out here.
We study low-temperature non-equilibrium quantum thermometry with a quantum harmonic oscillator
(QHO) probe strongly coupled to a bosonic bath at temperature $T$ through a
Drude--Ohmic spectral density.
Because the probe and bath together form a quadratic system, the dynamics admits an
exact analytical solution, derived by Boyanovsky and
Jasnow~\cite{BoyanoskyJasnow2017}, which forms the backbone of our analysis.
For comparison, we also treat the dynamics within a Markovian description based on
a renormalized Gorini--Kossakowski--Lindblad--Sudarshan (GKLS) master equation for
the probe~\cite{WinczewskiAlicki2021,Lobejko2024}, which accounts for the
coupling-induced renormalisation of the oscillator frequency. Even in this case, this equation cannot provide an accurate description for the dynamics induced in the probe by strong-coupling interactions. The initial probe--bath state is taken to be a product state with the probe in a
general single-mode Gaussian state, including squeezed states, a choice we interpret operationally as a quench preparation protocol.

Our analysis yields four main results.
First, in the non-Markovian regime the QFI behaves non-monotonically in time and
exhibits pronounced revivals, in direct analogy with the fermionic
probe~\cite{RavellRodriguez2024}; the non-Markovian memory of the bath can thus be
harnessed to improve thermometric precision.
Second, in the Markovian regime we prove that the QFI rises monotonically to its
stationary value and develops no interior optimum (Appendix~\ref{app:monotonicity}),
so that the optimal interrogation time always lies at the boundary
$t^*\to\infty$---both for a low-information probe such as the ground state and, as
the closed-form QFI shows, for squeezed preparations---sharpening the picture of
Ref.~\cite{SekatskiPerarnau2022}.
Third, squeezed initial states provide a large transient sensitivity that exceeds that of coherent or thermal states at short times, before thermalization degrades the
squeezing; this identifies squeezing as a genuine thermometric resource and shows
that combining non-Markovian dynamics with squeezed probes shifts the optimal
interrogation time to $t^*>0$, making time itself a resource.
Fourth, in steady-state thermometry strong coupling enhances the QFI relative to
the weak-coupling limit~\cite{Correa2017}: the relative error $T^2/\mathcal{F}$
diverges only polynomially, rather than exponentially, at low temperatures.
Taken together, these results characterise non-equilibrium bosonic thermometry
across the Markovian and non-Markovian regimes, and establish squeezing and
interrogation time as complementary resources for thermometric precision.

The rest of the paper is organised as follows.
Section~\ref{sec:prelim} recalls the necessary background on quantum parameter
estimation, Gaussian states, and Gaussian thermometry.
Section~\ref{sec:setup} introduces the model, the exact solution of
Ref.~\cite{BoyanoskyJasnow2017}, the Markovian master equation, and the
parametrisation of the initial Gaussian probe states.
Our results for steady-state and transient thermometry are presented in
Sec.~\ref{sec:results}.
We discuss the physical implications, the experimental and theoretical outlook,
and draw our conclusions in Sec.~\ref{sec:discussion}.
Technical details of the exact and Markovian solutions, and the proof of QFI
monotonicity, are collected in the appendices.

\section{Preliminaries}
\label{sec:prelim}

\subsection{Quantum parameter estimation}
\label{subsec:qpe}

We assume that the protocol for estimating the temperature of a bosonic sample
proceeds in three stages:
(i) the probe is prepared in a suitable state $\rho(0)$;
(ii) it then interacts with the sample, so that the temperature $T$ becomes
encoded in its state $\rho_T(t)$ after an interaction time $t$;
and (iii) information is extracted by a POVM on the probe, that is, a set of
measurement operators $\{E_x\}_x$ satisfying $E_x\geq 0$ and
$\sum_x E_x = \mathds{1}$~\cite{NielsenChuang2010}.
The outcome probabilities follow Born's rule $p(x|T) = \Tr(E_x \rho_T(t))$, and the
outcomes are mapped to an estimator $\tilde{T}(x)$ that returns an estimate of the
true temperature~\cite{Cramer1999,Rao1992}.

From classical estimation theory, the variance of any locally unbiased
estimator~\cite{Cramer1999,Rao1992} is bounded below by
\begin{align}
  \bigl(\Delta \tilde{T}(x)\bigr)^2 \;\geq\; \frac{1}{I[p(x|T)]},
  \label{eq:cramer-rao}
\end{align}
where $I[p(x|T)]$ is the Fisher information, which depends only on the likelihood
$p(x|T)$ of obtaining outcome $x$ given the true temperature $T$,
\begin{align}
  I[p(x|T)] \;=\; \int p(x|T)
  \left(\frac{\partial}{\partial T}\log p(x|T)\right)^{\!2} dx.
  \label{eq:fisher-classical}
\end{align}
This is the celebrated Cram\'er--Rao bound~\cite{Cramer1999,Rao1992}.
Since it refers to a specific initial state, interaction, and measurement, it is
natural to optimise the measurement so as to maximise the Fisher information.
Optimising over all POVMs yields the quantum Fisher information
(QFI)~\cite{BraunsteinCaves1994,Paris2009}.
Although this optimisation might appear formidable, it has a compact solution:
introducing the symmetric logarithmic derivative (SLD) $L$ through the Lyapunov
equation
\begin{align}
  \frac{\partial \rho_T}{\partial T} \;=\; \frac{L\rho_T + \rho_T L}{2},
  \label{eq:sld}
\end{align}
the QFI is simply $\QFI[\rho_T] = \Tr(L^2 \rho_T)$~\cite{BraunsteinCaves1994}.
The POVM that attains it is the projective measurement in the eigenbasis of $L$,
which may, however, be highly non-local and difficult to
implement~\cite{BraunsteinCaves1994,Paris2009}.
The QFI also admits a geometric interpretation as the rate of change of the state
fidelity,
\begin{align}
  \QFI[\rho_T] \;=\; \lim_{dT\to 0} 8\,\frac{1 - \sqrt{F[\rho_T,\rho_{T+dT}]}}{dT^2},
  \label{eq:qfi-fidelity}
\end{align}
where
$F(\rho_1,\rho_2)=\Tr\!\left(\sqrt{\sqrt{\rho_1}\,\rho_2\sqrt{\rho_1}}\right)^{\!2}$
is the fidelity.
Since $\QFI[\rho_T]\geq I[p(x|T)]$ by construction, one obtains the quantum
Cram\'er--Rao bound
\begin{align}
  \bigl(\Delta \tilde{T}(x)\bigr)^2 \;\geq\; \frac{1}{\QFI[\rho_T]}.
  \label{eq:qcramer-rao}
\end{align}

\subsection{Gaussian states and phase-space formalism}
\label{subsec:gaussian-states}

We now recall the elements of the covariance-matrix formalism for single-mode
continuous-variable systems that are used throughout the paper; comprehensive
treatments may be found in Refs.~\cite{Weedbrook2012,Serafini2017,Ferraro2005}.

For a quantum harmonic oscillator with bare frequency $\Omega$, Hamiltonian
$H_S = p^2/2 + \Omega^2 q^2/2$, and ladder operators $[a,a^\dagger]=1$, the
canonical quadratures are
\begin{align}
  q \;=\; \frac{1}{\sqrt{2\Omega}}\bigl(a+a^\dagger\bigr),
  \qquad
  p \;=\; i\sqrt{\frac{\Omega}{2}}\bigl(a^\dagger - a\bigr),
  \label{eq:quadratures}
\end{align}
satisfying $[q,p]=i$ (units $\hbar=1)$, and we collect them into the vector
$\mathbf{R}=(R_1,R_2)^\top \equiv (q,p)^\top$.
The state of the mode is characterised by its vector of first moments
$\bar{\mathbf{X}} = (\langle q\rangle, \langle p\rangle)^\top$ and its covariance
matrix (CM) $\boldsymbol{\sigma}$, with entries
\begin{align}
  \sigma_{kl}
  \;=\;
  \frac{1}{2}\langle\{R_k,\, R_l\}\rangle - \langle R_k\rangle\langle R_l\rangle.
  \label{eq:def-CM}
\end{align}
The commutation relations $[R_k, R_l] = i J_{kl}$, with
$\sympform=\bigl(\begin{smallmatrix}0&1\\-1&0\end{smallmatrix}\bigr)$ the
symplectic form, impose the Robertson--Schr\"odinger uncertainty principle
\begin{align}
  \boldsymbol{\sigma} + \frac{i}{2}\,\sympform \;\geq\; 0,
  \label{eq:uncertainty}
\end{align}
equivalent for a single mode to $\det\boldsymbol{\sigma}\geq 1/4$, a condition that
every physical state must satisfy~\cite{Serafini2017}.

A state $\rho$ is Gaussian when its Wigner function is a Gaussian distribution on
phase space,
\begin{align}
  W(\mathbf{x})
  \;=\;
  \frac{
    \exp\!\left[-\tfrac{1}{2}
    (\mathbf{x}-\bar{\mathbf{X}})^\top
    \boldsymbol{\sigma}^{-1}
    (\mathbf{x}-\bar{\mathbf{X}})\right]
  }{2\pi\sqrt{\det\boldsymbol{\sigma}}},
  \label{eq:Wigner}
\end{align}
with $\mathbf{x}=(q,p)^\top$ the phase-space point, and is then completely
determined by $\bar{\mathbf{X}}$ and $\boldsymbol{\sigma}$~\cite{Serafini2017}.
Its purity is
\begin{align}
  \mu \;\equiv\; \Tr[\rho^2]
  \;=\;
  \frac{1}{2\sqrt{\det\boldsymbol{\sigma}}},
  \label{eq:purity}
\end{align}
so that $\mu=1$ exactly when $\det\boldsymbol{\sigma}=1/4$, i.e.\ when the state is
pure and saturates Eq.~\eqref{eq:uncertainty}.

The states relevant to thermometry form a short hierarchy.
The vacuum $|0\rangle$ has $\bar{\mathbf{X}}=\mathbf{0}$ and
\begin{align}
  \boldsymbol{\sigma}_\mathrm{vac}
  \;=\;
  \frac{1}{2}
  \begin{pmatrix} 1/\Omega & 0 \\ 0 & \Omega \end{pmatrix},
  \label{eq:CM-vac}
\end{align}
while the thermal Gibbs state $\pi_T = e^{-H_S/T}/Z$ likewise has
$\bar{\mathbf{X}}=\mathbf{0}$ but a temperature-dependent CM,
\begin{align}
  \boldsymbol{\sigma}_\mathrm{th}
  \;=\;
  \left(\bar{n}+\tfrac{1}{2}\right)
  \begin{pmatrix} 1/\Omega & 0 \\ 0 & \Omega \end{pmatrix}
  \;=\;
  \frac{\coth(\Omega/2T)}{2}
  \begin{pmatrix} 1/\Omega & 0 \\ 0 & \Omega \end{pmatrix},
  \label{eq:CM-thermal}
\end{align}
where $\bar{n}(\Omega,T) = (e^{\Omega/T}-1)^{-1}$ is the Bose--Einstein occupation.
Because $\boldsymbol{\sigma}_\mathrm{th}$ depends on $T$ only through $\bar{n}$, it
is the natural object to differentiate with respect to $T$ when computing the QFI.
Displacing the vacuum by $D(\alpha)=\exp(\alpha a^\dagger - \alpha^* a)$, with
$\alpha=|\alpha|e^{i\theta}$, shifts the first moments to
$\bar{\mathbf{X}} =
(\sqrt{2/\Omega}\,|\alpha|\cos\theta,\,\sqrt{2\Omega}\,|\alpha|\sin\theta)^\top$
while leaving the CM unchanged; the resulting coherent states
$|\alpha\rangle = D(\alpha)|0\rangle$ saturate the uncertainty principle and carry
equal noise in both quadratures, up to the $\Omega$-dependent rescaling.

Squeezing introduces the non-classical features central to this work.
The squeezing operator
$S(\xi)=\exp\!\left(\frac{\xi^* a^2 - \xi a^{\dagger 2}}{2}\right)$, with
$\xi = r\,e^{2i\varphi}$ $(r\geq 0$, $\varphi\in[0,\pi))$, produces the squeezed
vacuum $S(\xi)|0\rangle$, which has $\bar{\mathbf{X}}=\mathbf{0}$ and
\begin{align}
  \boldsymbol{\sigma}_\mathrm{sq}
  \;=\;
  \frac{1}{2}
  \begin{pmatrix}
    \dfrac{\cosh 2r - \cos 2\varphi\,\sinh 2r}{\Omega}
    &
    \sin 2\varphi\,\sinh 2r
    \\[8pt]
    \sin 2\varphi\,\sinh 2r
    &
    \Omega\!\left(\cosh 2r + \cos 2\varphi\,\sinh 2r\right)
  \end{pmatrix}.
  \label{eq:CM-squeezed}
\end{align}
The parameter $r$ sets the degree of noise reduction: for $\varphi=0$, squeezing
along $q$, the position variance is reduced to $\sigma_{11}=e^{-2r}/(2\Omega)$,
a factor $e^{-2r}$ below the vacuum level, while the momentum variance is amplified
to $\sigma_{22}=\Omega e^{2r}/2$; as $r\to 0$ one recovers
Eq.~\eqref{eq:CM-vac}, and the angle $\varphi$ fixes the orientation of the
squeezed quadrature, with $\varphi=0$ optimal for thermometry through position
measurement.
The most general single-mode pure Gaussian state is the displaced squeezed state
$|\alpha,\xi\rangle = D(\alpha)\,S(\xi)|0\rangle$, with CM given by
Eq.~\eqref{eq:CM-squeezed} and first moments as in the coherent state.
Its mean energy,
\begin{align}
  E
  \;=\;
  \langle H_S\rangle
  \;=\;
  \Omega\!\left(|\alpha|^2 + \frac{\cosh 2r}{2}\right),
  \label{eq:energy-general}
\end{align}
separates into the displacement energy $\Omega|\alpha|^2$ and the squeezing energy
$\Omega\cosh(2r)/2$, reducing to the zero-point energy $\Omega/2$ when
$r=|\alpha|=0$.
This decomposition underlies the energy-constrained comparison of probe states
introduced in Sec.~\ref{subsec:param}.

\subsection{Gaussian thermometry}
\label{subsec:gaussian}

In this section, we describe the theoretical results in thermometry for Gaussian
states and measurements.
These results are by no means new, and we closely follow the exposition of
Ref.~\cite{Cenni2022}.

Gaussian measurements are characterised by a displacement vector $\mathbf{d}^M_s$
and a covariance matrix $\boldsymbol{\sigma}^M_s$, where $s$ denotes the POVM
setting.
For a single-mode probe $(m=1)$, the probability of obtaining an outcome
$\mathbf{a}$ is
\begin{align}
  p(\mathbf{a}|\rho,s)
  \;=\;
  \frac{
    \exp\!\left[{-\tfrac{1}{2}
    \bigl(\mathbf{d}^M_s+\mathbf{a}-\mathbf{d}\bigr)^{\!\top}
    \bigl(\boldsymbol{\sigma}^M_s+\boldsymbol{\sigma}\bigr)^{-1}
    \bigl(\mathbf{d}^M_s+\mathbf{a}-\mathbf{d}\bigr)}\right]
  }{
    2\pi \sqrt{\det(\boldsymbol{\sigma}^M_s+\boldsymbol{\sigma})}
  },
  \label{eq:gaussian-prob}
\end{align}
where $\mathbf{d}$ and $\boldsymbol{\sigma}$ are the displacement vector and CM of
the probe state, respectively.
Since this distribution is Gaussian, the classical Fisher information can be
computed analytically~\cite{Cenni2022}:
\begin{equation}
  I[p(\mathbf{a}|\rho,s)]
  = \partial_T \mathbf{d}^{\!\top}
     \bigl(\boldsymbol{\sigma}+\boldsymbol{\sigma}^M_s\bigr)^{-1}
     \partial_T \mathbf{d}
  +
     \frac{1}{2}\Tr\!\left[
       \Bigl(\bigl(\boldsymbol{\sigma}+\boldsymbol{\sigma}^M_s\bigr)^{-1}
       \partial_T\boldsymbol{\sigma}\Bigr)^{\!2}
     \right],
  \label{eq:cfi-gaussian}
\end{equation}
where we have omitted the dependence on $T$ for conciseness.
Note that the second term involves the \emph{trace of the square} of the matrix
$(\boldsymbol{\sigma}+\boldsymbol{\sigma}^M_s)^{-1}\partial_T\boldsymbol{\sigma}$. 

In this work, we focus on the homo- and heterodyne measurements as well as the optimal measurement, i.e., the eigenbasis of the SLD. For homodyne measurement of the position quadrature,
$\boldsymbol{\sigma}^M_s = \lim_{r\to\infty}\operatorname{diag}(1/r,\,r)$~\cite{Cenni2022},
and Eq.~\eqref{eq:cfi-gaussian} reduces to
\begin{align}
  I[p(\mathbf{a}|\rho,\mathrm{homo})]
  \;=\;
  \frac{(\partial_T \sigma_{11})^2}{2\sigma_{11}^2}
  \;+\;
  \frac{(\partial_T d_1)^2}{\sigma_{11}}.
  \label{eq:cfi-homo}
\end{align}
Heterodyne detection corresponds to a simultaneous measurement of both quadratures
via projection onto coherent states. The covariance matrix of the measurement is the identity matrix, $\boldsymbol{\sigma}^M_s=\mathbb{I}$, so, 
equation~\eqref{eq:cfi-gaussian} then yields
\begin{align}
  I[p(\mathbf{a}|\rho,\mathrm{hetero})]
  \;=\;
  \partial_T \mathbf{d}^{\!\top}
  \bigl(\boldsymbol{\sigma}+\mathbb{I}\bigr)^{-1}
  \partial_T \mathbf{d}
  \;+\;
  \frac{1}{2}
  \Tr\!\left[\Bigl(
    \bigl(\boldsymbol{\sigma}+\mathbb{I}\bigr)^{-1}
    \partial_T\boldsymbol{\sigma}
  \Bigr)^{\!2}\right].
  \label{eq:cfi-hetero}
\end{align}
For zero-mean probe states $(\mathbf{d}=\mathbf{0})$ the first term vanishes and
heterodyne performance is determined solely by the temperature sensitivity of the
CM. In our case, though, even if $\mathbf{d} \neq 0$, the derivative of the vector with respect to the temperature is zero. Finally, the QFI also has a closed form for Gaussian states, originally derived in
Refs.~\cite{Monras2013,Pinel2013}.
Decomposing the SLD in quadrature moments,
\begin{align}
  L \;=\; \sum_{kl} C^{(2)}_{kl} R_k R_l
        \;+\; \sum_k C^{(1)}_k R_k
        \;+\; C^{(0)},
  \label{eq:sld-gaussian}
\end{align}
where $C^{(2)}$ is a real $2\times 2$ matrix, $C^{(1)}$ a real $2$-vector, and
$C^{(0)}\in\mathbb{R}$, the coefficients are determined by
\begin{align}
  \partial_T \boldsymbol{\sigma}
    &= 2\boldsymbol{\sigma} C^{(2)} \boldsymbol{\sigma}
     + \frac{1}{2}\sympform C^{(2)} \sympform,
  \label{eq:C2-eq}\\
  C^{(1)}
    &= \boldsymbol{\sigma}^{-1}\partial_T\mathbf{d}
     - 2C^{(2)}\mathbf{d},
  \label{eq:C1-eq}\\
  C^{(0)}
    &= -\bigl(C^{(1)}\bigr)^{\!\top}\mathbf{d}
     - \Tr\bigl[C^{(2)}\boldsymbol{\sigma}\bigr]
     - \mathbf{d}^{\!\top} C^{(2)}\mathbf{d},
  \label{eq:C0-eq}
\end{align}
and the QFI reads
\begin{align}
  \QFI(\mathbf{d},\boldsymbol{\sigma})
  \;=\;
  \partial_T\mathbf{d}^{\!\top}\boldsymbol{\sigma}^{-1}\partial_T\mathbf{d}
  \;+\;
  2\Tr\!\left[
    \Bigl(C^{(2)}\boldsymbol{\sigma}\Bigr)^{\!2}
    + \Bigl(\frac{1}{2}C^{(2)}\sympform\Bigr)^{\!2}
  \right].
  \label{eq:qfi-gaussian}
\end{align}
\section{Setup}
\label{sec:setup}

\subsection{Model: bosonic probe and bath}
\label{subsec:model}

The physical system of interest is a quantum harmonic oscillator (QHO) of unit mass
interacting with a bosonic reservoir, the latter being itself an infinite
collection of non-interacting oscillators.
In the metrological scenario the QHO is the probe, and the reservoir is the sample
whose temperature $T$ we wish to estimate.
The Hamiltonian of the joint system is\footnote{In this section we closely follow
the notation of Ref.~\cite{BoyanoskyJasnow2017}.}
\begin{align}
  H \;=\; H_S + H_B + H_{SB},
  \label{eq:H-total}
\end{align}
with
\begin{align}
  H_S  &= \frac{p^2}{2} + \frac{\osc^2}{2}\,q^2, \label{eq:H-S}\\
  H_B  &= \sum_k \frac{1}{2}\!\left(P_k^2 + W_k^2\,Q_k^2\right), \label{eq:H-B}\\
  H_{SB} &= -q\sum_k C_k\,Q_k. \label{eq:H-SB}
\end{align}
Here $H_S$ is the free Hamiltonian of the probe, with canonical pair $(q,p)$ and
bare frequency $\Omega$, and $H_B$ is the free Hamiltonian of the bath, described
by canonical pairs $\{(Q_k,P_k)\}$ with frequencies $\{W_k\}$.
The coupling~\eqref{eq:H-SB} is position-like: the probe coordinate $q$ couples
bilinearly to each bath coordinate $Q_k$ through the real charges $\{C_k\}$.
In the continuum limit, the bath is fully characterised by its spectral density,
\begin{align}
  J(\omega)
  \;=\; \sum_k \frac{\pi C_k^2}{2W_k}
  \bigl[\delta(\omega-W_k)-\delta(\omega+W_k)\bigr],
  \label{eq:spectral-density-def}
\end{align}
so that all physical quantities depend on the discrete charges $\{C_k\}$ only
through $J(\omega)$; the charges themselves carry no independent physical
meaning.

We specialise to the Drude--Ohmic spectral
density~\cite{BoyanoskyJasnow2017,CaldeiraLeggett1983}
\begin{align}
  J(\omega)
  \;=\; \frac{\gamma\,\omega\,\Lambda^2}{\omega^2+\Lambda^2},
  \label{eq:spectral-Drude}
\end{align}
where $\gamma>0$ is the dimensionless coupling strength and $\Lambda$ the
high-frequency cut-off.
This density interpolates between two familiar limits: at low frequencies
$\omega\ll\Lambda$ it reduces to the Ohmic form $J(\omega)\approx\gamma\omega$,
whereas for $\omega\gg\Lambda$ it decays as $\gamma\Lambda^2/\omega$, supplying a
physically motivated ultraviolet regularisation.
The cut-off thus sets the timescale $\Lambda^{-1}$ below which the bath retains
memory and non-Markovian effects become relevant; in the limit $\Lambda\to\infty$
at fixed $\gamma$ the density becomes strictly Ohmic and the bath memoryless,
recovering the Markovian regime.
Throughout this work we set $\Lambda=100$ and $\Omega=1$ in natural units, placing
the dynamics in the high-cutoff regime of Ref.~\cite{BoyanoskyJasnow2017}, and we
vary $\gamma$ over several orders of magnitude (underdamped regime), subject to the stability condition
derived below.

Finally, the initial probe--bath state is taken to be a product state,
\begin{align}
  \rho(0) \;=\; \rho_S(0)\otimes\pi_B,
  \label{eq:initial-state}
\end{align}
where $\pi_B = e^{-H_B/T}/Z_B$ is the thermal state of the bath at temperature $T$
and $\rho_S(0)$ is a general single-mode Gaussian state of the probe. The product-state assumption is standard when dealing with open-system evolutions. 

\subsection{Exact solution}
\label{subsec:exact}

Because the total Hamiltonian~\eqref{eq:H-total} is quadratic in the canonical
variables, the Heisenberg equations of motion are linear and can be solved exactly.
Eliminating the bath, one obtains a quantum Langevin equation for the probe
coordinate~\cite{BoyanoskyJasnow2017},
\begin{align}
  \ddot{q}(t) + \osc^2\,q(t)
  \;+\; \int_0^t ds\; \Sigma(t-s)\,q(s)
  \;=\; \xi(t),
  \label{eq:langevin}
\end{align}
in which the dissipation kernel follows from the spectral
density~\eqref{eq:spectral-Drude} via
$\Sigma(\tau)=-(2/\pi)\int_0^\infty d\omega\,J(\omega)\sin(\omega\tau)$, which for the
Drude--Ohmic form evaluates to
$\Sigma(\tau)=-\gamma\Lambda^2\,e^{-\Lambda|\tau|}\,\mathrm{sign}(\tau)$, so that the
memory term in Eq.~\eqref{eq:langevin} reads
$-\gamma\Lambda^2\int_0^t ds\,e^{-\Lambda(t-s)}q(s)$~\cite{BoyanoskyJasnow2017}, and
\begin{align}
  \xi(t)
  \;=\; \sum_k C_k
  \!\left[Q_k(0)\cos(W_k t)
          + \frac{P_k(0)}{W_k}\sin(W_k t)\right]
  \label{eq:noise}
\end{align}
is the quantum noise operator carrying the bath initial conditions.
The two-time correlation functions of $\xi(t)$, fully determined by $J(\omega)$ and
$T$, are given in Appendix~\ref{app:exact}.

The dissipation kernel renormalises the oscillator frequency.
In the high-cutoff regime its static $(\omega\to0)$ limit shifts $\osc^2$ by
$-\gamma\Lambda$, since $\gamma\Lambda^2\!\int_0^t e^{-\Lambda(t-s)}ds\to\gamma\Lambda$
for $t\gg\Lambda^{-1}$, so the long-time dynamics is governed by the renormalized
frequency
\begin{align}
  \oscR^2 \;\equiv\; \osc^2 - \gamma\Lambda,
  \label{eq:OmegaR}
\end{align}
which remains real, and the probe dynamically stable, provided
\begin{align}
  \gamma \;<\; \frac{\osc^2}{\Lambda}.
  \label{eq:stability}
\end{align}
This reproduces the renormalisation and stability condition of
Ref.~\cite{BoyanoskyJasnow2017} [their Eqs.~(II.33),(II.35)]; the Laplace transform
of the kernel, $\tilde\Sigma(s)=-\gamma\Lambda^2/(\Lambda+s)$, supplies both the
damping $\gamma s$ and the shift $-\gamma\Lambda$ at large $\Lambda$.
When this bound is violated the effective potential turns imaginary, so we restrict
throughout to the stable regime~\eqref{eq:stability}.
The character of the relaxation is then fixed by the effective oscillation
frequency
\begin{align}
  \effW \;\equiv\; \sqrt{\oscR^2 - \frac{\gamma^2}{4}}.
  \label{eq:W}
\end{align}
For $\effW$ real, i.e.\ $\oscR>\gamma/2$, the probe is under-damped: it performs
damped oscillations at frequency $\effW$, and these oscillations are the origin of
the non-Markovian memory effects and the non-monotonic QFI found below.
For $\effW$ imaginary, i.e.\ $\oscR<\gamma/2$, the probe is over-damped and relaxes
monotonically to its steady state.
With our parameters $\Lambda=100$ and $\Omega=1$, the under-damped condition
$\oscR>\gamma/2$ fails only in a vanishingly thin window adjacent to the stability
boundary $\gamma=\Omega^2/\Lambda=0.01$, so all couplings considered here lie in the
under-damped regime.
The complete analytical solution for the covariance matrix
$\boldsymbol{\sigma}(t)$ is collected in Appendix~\ref{app:exact}. The frequency integrals determining the exact time-dependent covariance
matrix are evaluated numerically using custom GPU-parallel CUDA
routines~\cite{WinczewskiCUDA2024}.

\subsection{Markovian approximation}
\label{subsec:markov}

For comparison with the exact dynamics, we also describe the probe within a
Markovian, renormalized GKLS master
equation~\cite{WinczewskiAlicki2021,Lobejko2024}.
The essential point~\cite{WinczewskiAlicki2021,Lobejko2024} is to absorb the
coupling-induced frequency renormalisation by replacing the bare frequency
$\Omega$ with $\Omega_R$ throughout.
Defining ladder operators at the renormalized frequency,
\begin{align}
  a \;=\; \sqrt{\frac{\oscR}{2}}\,q + \frac{i}{\sqrt{2\oscR}}\,p,
  \qquad
  a^\dagger \;=\; \sqrt{\frac{\oscR}{2}}\,q - \frac{i}{\sqrt{2\oscR}}\,p,
  \label{eq:ladder}
\end{align}
which differ from those in Eq.~\eqref{eq:quadratures} associated with the bare
frequency, the master equation reads
\begin{align}
  \partial_t\rho
  \;=\; -i\bigl[\oscR\,a^\dagger a,\,\rho\bigr]
  \;+\; \gamma_\downarrow\,\diss{a}\rho
  \;+\; \gamma_\uparrow\,\diss{a^\dagger}\rho,
  \label{eq:GKLS}
\end{align}
with $\diss{A}\rho = A\rho A^\dagger - \frac{1}{2}\{A^\dagger A,\rho\}$ and rates
\begin{align}
  \gamma_\downarrow &\;=\; \gamma\bigl[\bar{n}(\oscR)+1\bigr], &
  \gamma_\uparrow   &\;=\; \gamma\,\bar{n}(\oscR),
  \label{eq:rates}
\end{align}
where $\bar{n}(\oscR)=(e^{\oscR/T}-1)^{-1}$.
The net decay rate $\gamma_\downarrow-\gamma_\uparrow=\gamma$ is independent of
$T$, and the unique steady state is the Gibbs state at the renormalized frequency,
which correctly captures the energy renormalisation induced by strong coupling.
The corresponding analytical solution for $\boldsymbol{\sigma}(t)$ is given in
Appendix~\ref{app:markov}.

\subsection{Initial Gaussian probe states}
\label{subsec:param}

The probe is initialised in a general single-mode pure Gaussian state, the
displaced squeezed state $D(\alpha)S(\xi)|0\rangle$ of
Sec.~\ref{subsec:gaussian-states}, now built from the ladder
operators~\eqref{eq:ladder} at the renormalized frequency $\oscR$.
Parametrising it by the displacement $\alpha=|\alpha|e^{i\theta}$ and the squeezing
$\xi=re^{2i\varphi}$, its covariance matrix is Eq.~\eqref{eq:CM-squeezed} with
$\Omega\to\oscR$,
\begin{align}
  \boldsymbol{\sigma}(0)
  \;=\;
  \frac{1}{2}
  \begin{pmatrix}
    \dfrac{\cosh 2r - \cos2\varphi\,\sinh 2r}{\oscR}
    & \sin2\varphi\,\sinh 2r \\[8pt]
    \sin2\varphi\,\sinh 2r
    & \oscR\!\left(\cosh 2r + \cos2\varphi\,\sinh 2r\right)
  \end{pmatrix},
  \label{eq:CM-pure}
\end{align}
and its mean energy is
\begin{align}
  E
  \;=\; \langle H_S \rangle
  \;=\; \oscR\!\left(|\alpha|^2 + \frac{\cosh 2r}{2}\right).
  \label{eq:energy}
\end{align}

To compare probe states fairly we hold their energy fixed, following
Ref.~\cite{RavellRodriguezMorelli2024}, and redistribute it between the
displacement energy $\oscR|\alpha|^2$ and the squeezing energy
$\oscR\cosh(2r)/2$.
At fixed $E$ and squeezing angle $\varphi$, Eq.~\eqref{eq:energy} ties the two
parameters together through
\begin{align}
  |\alpha|^2 \;=\; \frac{E}{\oscR} - \frac{\cosh 2r}{2},
  \label{eq:energy-constraint}
\end{align}
so that increasing the squeezing $r$ necessarily reduces the displacement
$|\alpha|$.
The two extremes are the purely squeezed state, $|\alpha|=0$ with
$r=\frac{1}{2}\operatorname{arccosh}(2E/\oscR)$, and the coherent state, $r=0$ with
$|\alpha|^2=E/\oscR-1/2$.
Squeezing along $q$ $(\varphi=0)$ minimises $\sigma_{11}$ and thereby sharpens the
probe's sensitivity to bath-induced changes in $\langle q^2\rangle$; as we show in
Sec.~\ref{sec:results}, this produces a much larger transient QFI than a coherent
or thermal state of the same energy, until the squeezing is eventually erased by
thermalisation.

\section{Results}
\label{sec:results}

We organise our results around the two regimes of interest.
Section~\ref{subsec:steady} treats the steady state, reached once the probe has
fully thermalised with the bath, while Sec.~\ref{subsec:transient} turns to the
transient regime, where the probe is measured before thermalisation; the latter
discussion proceeds from the non-Markovian exact dynamics to its Markovian
counterpart, then to the role of initial squeezing, and finally to the dependence
on coupling strength.
Throughout this section we set $\hbar = k_B = 1$, $\Omega = 1$, and $\Lambda = 100$, vary the
coupling $\gamma$ subject to the stability condition~\eqref{eq:stability}.

\subsection{Steady-state thermometry}
\label{subsec:steady}

We first consider the probe fully thermalised with the bath.
A natural figure of merit is the dimensionless relative error bound
\begin{align}
  \left(\frac{\delta T}{T}\right)^2
  \;\geq\;
  \frac{1}{T^2\,\QFI[\rho_T^\mathrm{ss}]},
  \label{eq:rel-error}
\end{align}
where $\rho_T^\mathrm{ss}$ is the steady state of the probe at temperature $T$.
Figure~\ref{fig:shot-noise} shows $1/(T^2 \QFI)$ as a function of the reduced
temperature $T/\oscR$ for several coupling strengths, comparing the exact solution
(solid coloured lines) with the prediction of the renormalized GKLS
equation~\eqref{eq:GKLS} (dashed).

\begin{figure}[h]
  \centering
  \includegraphics[width=0.7\linewidth]{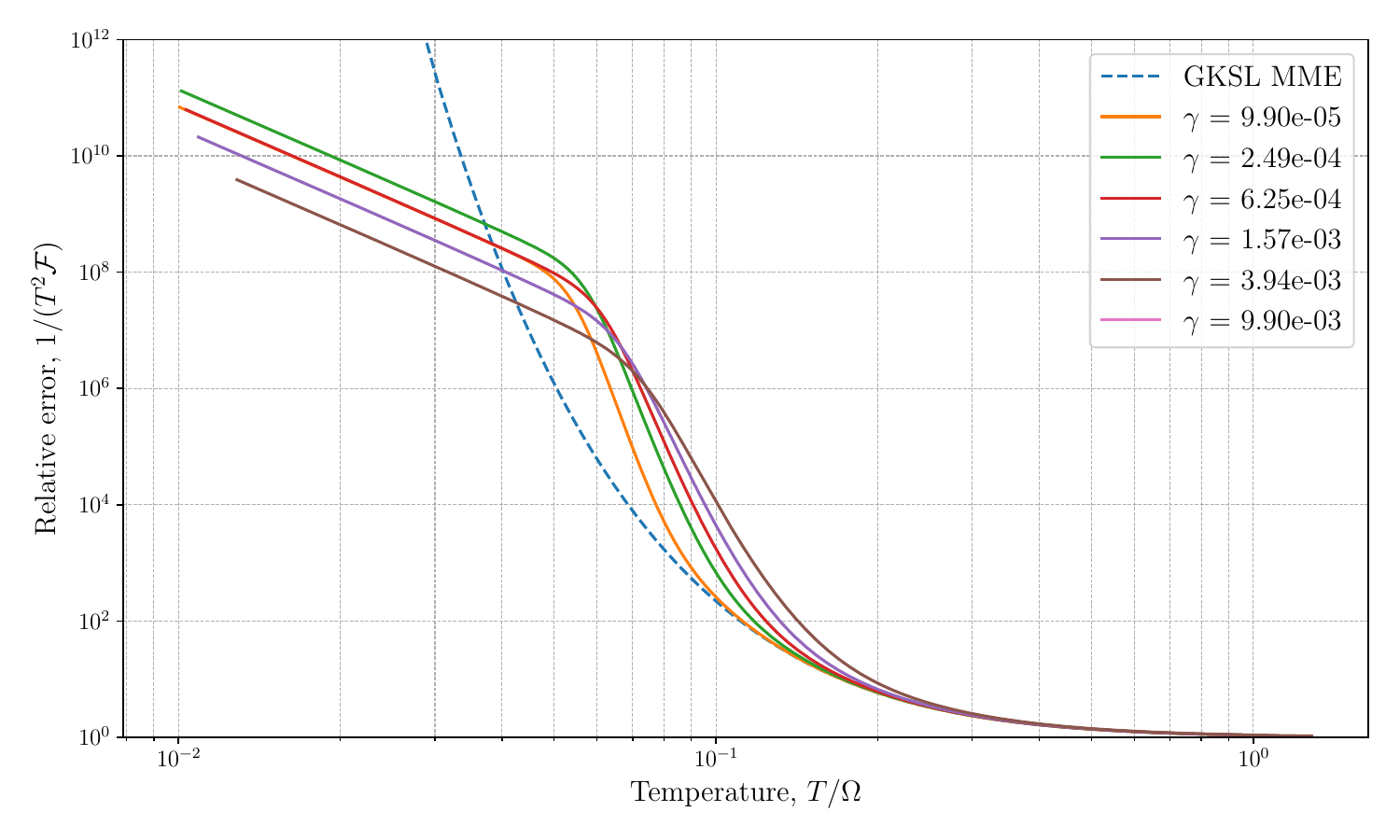}
  \caption{%
    Log-log plot of the minimum relative error $1/(T^2\,\QFI)$ at steady state
    as a function of $T/\osc$, for
    $\gamma\in\{9.90\times10^{-5},\;2.49\times10^{-4},\;6.25\times10^{-4},\;
    1.57\times10^{-3},\;3.94\times10^{-3},\;9.90\times10^{-3}\}$
    (solid coloured lines).
    The dashed line is the Markovian (GKLS) prediction, which is independent of
    $\gamma$.
    Parameters: $\Omega=1$, $\Lambda=100$.
  }
  \label{fig:shot-noise}
\end{figure}

The figure reveals a clear three-part structure.
At high temperatures $T\gg\oscR$ all curves collapse onto one another, recovering
the classical equipartition regime in which the relative error is essentially
coupling-independent. As the temperature is lowered the curves separate, and the contrast between the
Markovian and exact descriptions becomes the dominant feature. As expected, for the smaller values of $\gamma$, the Markovian prediction matches the exact solution in a wider window of temperature values. The Markovian approximation gives a lower relative error than the exact calculations, for some values of $T/\Omega_R$ but these correspond to the regime in which the Markovian master equation has a much limited accuracy, rendering the Markovian prediction unreliable in that regime. 

The Markovian prediction can be obtained in closed form, which makes the contrast
quantitative.
The steady state of Eq.~\eqref{eq:GKLS} is the Gibbs state at the renormalized
frequency $\oscR$, for which the temperature dependence resides entirely in the
energy distribution.
Evaluating the QFI for this thermal state---equivalently, recognising that the
optimal measurement is energy measurement and that the Fisher information of the
resulting Boltzmann distribution is the variance of the energy divided by
$T^4$---gives the standard thermal result~\cite{Correa2015,Mehboudi2019}
\begin{align}
  \QFI^\mathrm{Markov}_\mathrm{ss}
  \;=\;
  \frac{\oscR^2}{T^4}\,\bar{n}(\bar{n}+1),
  \qquad
  \bar{n} = \frac{1}{e^{\oscR/T}-1}.
  \label{eq:markov-qfi}
\end{align}
This expression is independent of $\gamma$, which is why the Markovian curve in
Fig.~\ref{fig:shot-noise} is universal.
At low temperature $T\ll\oscR$ one has $\bar{n}\approx e^{-\oscR/T}$, and
Eq.~\eqref{eq:markov-qfi} yields an exponentially diverging relative error,
\begin{align}
  \frac{1}{T^2\,\QFI^\mathrm{Markov}_\mathrm{ss}}
  \;=\;
  \frac{T^2}{\oscR^2\,\bar{n}(\bar{n}+1)}
  \;\approx\;
  \frac{T^2}{\oscR^2}\,e^{\,\oscR/T},
  \qquad T \ll \oscR,
  \label{eq:markov-scaling}
\end{align}
the familiar Boltzmann suppression of thermometric precision at ultralow
temperatures~\cite{Correa2015,HovhannisyanCorrea2018}.

The exact solution departs sharply from this prediction as the coupling grows.
At the strongest coupling shown $(\gamma\approx9.9\times10^{-3}$, near the
stability boundary), the relative error diverges only polynomially as the
temperature is lowered,
\begin{align}
  \frac{1}{T^2\,\QFI^\mathrm{exact}}
  \;\sim\;
  \left(\frac{T}{\oscR}\right)^{-2-\eta},
  \qquad T \ll \oscR,
  \label{eq:exact-scaling}
\end{align}
with the exponent $\eta \approx 2.03$ read off from the slope of Fig.~\ref{fig:shot-noise}, which is roughly the same value as the one obtained in the fermionic setting~\cite{RavellRodriguez2024}.

This exponent is not accidental. At low temperature the thermal part of the
steady-state position variance scales as
$\sigma_{11}(T)-\sigma_{11}(0)\sim T^{s+1}$ for a bath whose low-frequency
spectral density behaves as $J(\omega)\sim\omega^{s}$: by the
fluctuation--dissipation theorem the response satisfies
$\operatorname{Im}\chi(\omega)\sim\omega^{s}$ at small $\omega$, and
$\int_0^\infty d\omega\,\omega^{s}/(e^{\omega/T}-1)\sim T^{s+1}$.
The temperature sensitivity of the variance is then $\partial_T\sigma_{11}\sim
T^{s}$, the position-quadrature QFI scales as $(\partial_T\sigma_{11})^2/\sigma_{11}^2
\sim T^{2s}$, and the relative error as $T^{-2-2s}$, identifying $\eta=2s$.
For the Ohmic case $s=1$ realised by the Drude--Ohmic density~\eqref{eq:spectral-Drude}
this predicts $\eta=2$, consistent with the measured $\eta\approx2.03$.

The replacement of the exponential scaling~\eqref{eq:markov-scaling} by a
power law is a qualitative, not merely quantitative, improvement, and it follows
from the low-frequency behaviour $J(\omega)\sim\gamma\omega$ of the Drude--Ohmic
bath: at strong coupling the probe inherits the bath's low-frequency thermal
fluctuations, circumventing the Boltzmann suppression that constrains the weakly
coupled (Markovian) probe.
In this sense our results extend the equilibrium strong-coupling enhancement of
Ref.~\cite{Correa2017} to the Drude--Ohmic bath with finite cut-off $\Lambda$. This polynomial scaling has also been observed in other models~\cite{HovhannisyanCorrea2018,Potts2019,Jorgensen_2020,Mihailescu_2023}.

\subsection{Transient thermometry}
\label{subsec:transient}

We now measure the probe at a finite interaction time $t$, before thermalisation,
and follow the QFI together with the homodyne and heterodyne Fisher informations
as the physical situation is varied.
A point worth stating at the outset is that the QFI vanishes at $t=0$ for every
preparation: at the instant of contact the probe is still in its
temperature-independent initial state and uncorrelated with the bath, so
$\partial_T\rho_S(0)=0$ and $\QFI(0)=0$.
All temperature information is acquired dynamically, and the transient ``advantages''
discussed below are therefore statements about how, and how fast, the QFI builds up
from zero.

Figure~\ref{fig:ex-diff-temp} shows these three quantities as functions of time for
the exact solution at strong coupling $(\gamma=9.901\times10^{-3}$, giving
$\oscR\approx0.0995)$ and three temperatures $T\in\{10^{-3},10^{-2},10^{-1}\}$.

\begin{figure}[h]
  \centering
  \includegraphics[width=0.5\linewidth]{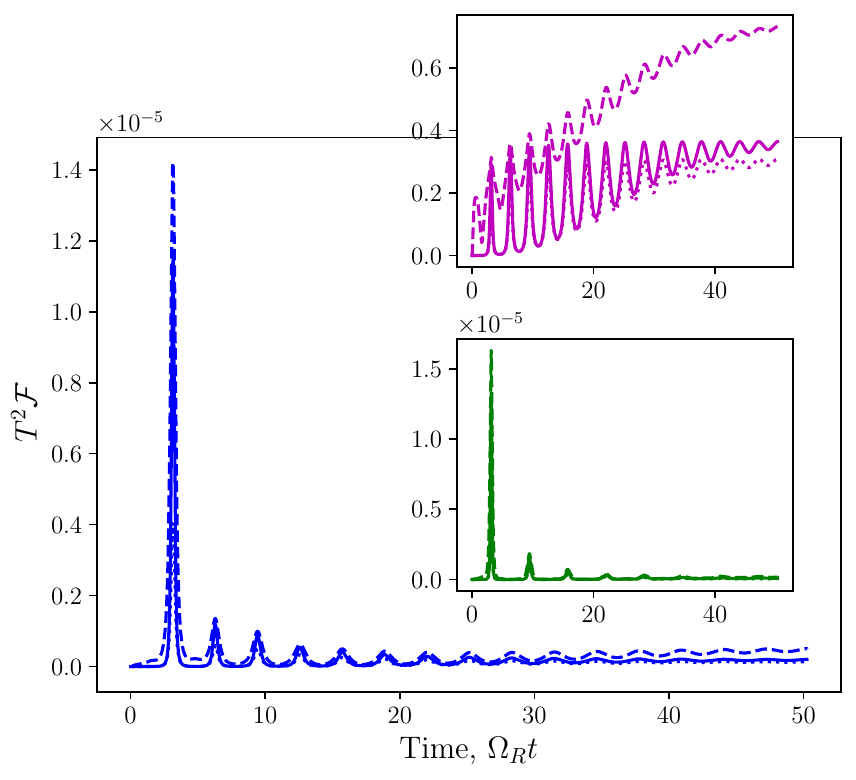}
  \caption{%
    Time dependence of the metrological sensitivity for the exact solution times temperature squared.
    Dashed: QFI; solid: homodyne CFI; dotted: heterodyne CFI.
    Colours: $T=10^{-3}$ (blue, main plot), $T=10^{-2}$ (green, bottom inset), $T=10^{-1}$ (magenta, top inset).
    Parameters: $\gamma=9.901\times10^{-3}$, $\Lambda=100$, $\Omega=1$, and $\Omega_R \approx 0.99$.
    Initial state: ground state $(r=|\alpha|=0)$.
  }
  \label{fig:ex-diff-temp}
\end{figure}

The defining feature is that for the exact solution the QFI---and also its classical version associated with homo- and heterodyne measurements---is non-monotonic in time: rather than rising
smoothly to its steady-state value, it undergoes revivals on the timescale set by
the effective oscillation period $2\pi/\effW$.
These revivals are the direct signature of the bath memory encoded in the
kernel~\eqref{eq:langevin}: information that has flowed from the probe into the
bath is partially returned at each oscillation cycle, transiently lifting the
sensitivity above its long-time value.
The same phenomenon was found for the fermionic probe in
Ref.~\cite{RavellRodriguez2024}, which confirms that a non-monotonic transient QFI
is generic to strongly coupled thermometry with quadratic Hamiltonians, regardless
of statistics.
The effect is most pronounced at the lowest temperature, where the peak transient
QFI clearly exceeds the steady-state value, so that a finite interrogation time
$t^*>0$ strictly outperforms waiting for thermalisation.
As for the measurement, the homodyne curve closely follows the QFI at all times
and temperatures, so that a simple position-quadrature measurement is essentially
optimal and the highly non-local SLD measurement is unnecessary; the heterodyne
curve lies consistently below, confirming that the temperature information is
carried mainly by the position variance $\sigma_{11}(t)$, as expected for a bath
whose spectral density grows linearly at low frequencies.

Replacing the exact dynamics by the renormalized GKLS
equation~\eqref{eq:GKLS} changes the picture qualitatively, as
Fig.~\ref{fig:Mark-diff-temp} shows.

\begin{figure}[h]
  \centering
  \includegraphics[width=0.5\linewidth]{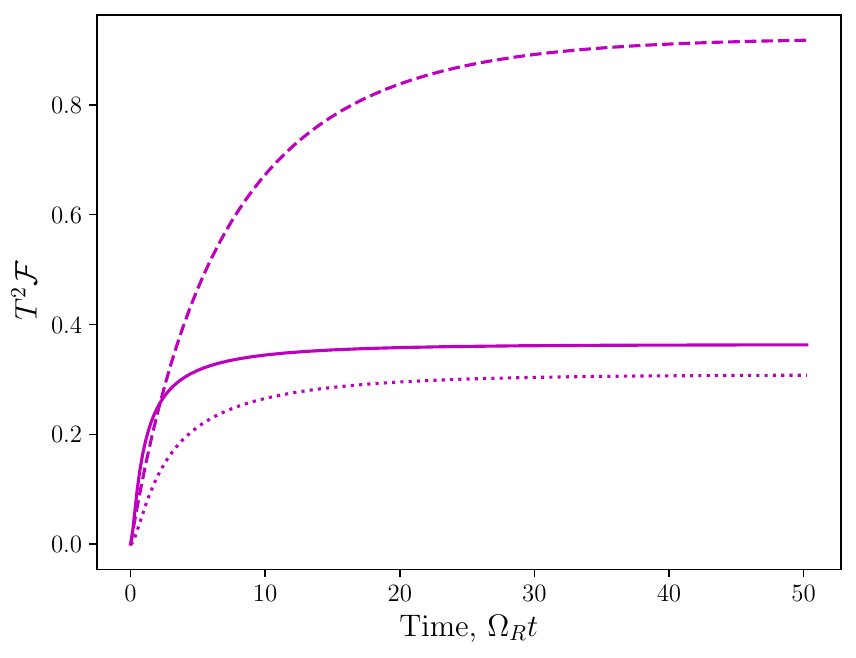}
  \caption{%
    Time dependence of the metrological sensitivity times temperature squared in the Markovian approximation.
    Parameters: $\gamma=9.901\times10^{-3}$, $\Lambda=100$, $\Omega=1$, $\Omega_R \approx 0.99$, and $T=10^{-1}$.
    Initial state: ground state $(r=|\alpha|=0)$. 
  }
  \label{fig:Mark-diff-temp}
\end{figure}

In contrast to the exact dynamics, the Markovian QFI grows monotonically towards
its steady-state value, as established in Appendix~\ref{app:monotonicity}; for the
ground-state preparation shown here it approaches the steady state from below
(Proposition~\ref{prop:iso}).
The optimal interrogation time is therefore pushed to $t^*\to\infty$: the best one
can do is wait for full thermalisation, in agreement with and reinforcing
Ref.~\cite{SekatskiPerarnau2022}.
The two descriptions also differ enormously in magnitude.
At $T=10^{-3}$ the Markovian steady-state QFI is, from
Eq.~\eqref{eq:markov-qfi}, $\QFI^\mathrm{Markov}_\infty\approx10^{-33}$, whereas the
exact QFI peaks at order $10^{-1}$ (Fig.~\ref{fig:ex-diff-temp})---a suppression of
some thirty orders of magnitude.
The non-Markovian bath memory, retained by the exact solution but discarded by the
master equation, is thus the physical ingredient that makes precise thermometry at
ultralow temperatures possible at all.

We now turn from the dynamics to the initial state, asking how squeezing affects
the transient QFI.
Figure~\ref{fig:diff-r} shows the QFI, together with the homodyne and heterodyne
Fisher informations in the insets, for an initial squeezed vacuum $S(r)|0\rangle$
with $r\in\{0,1,3,5\}$ and squeezing angle $\varphi=0$, at $T=10^{-3}$.
The states are compared at fixed total energy $E$: as $r$ grows, energy is
transferred from displacement to squeezing according to the constraint given in
Eq.~\eqref{eq:energy-constraint}.

\begin{figure}[h]
  \centering
  \includegraphics[width=0.5\linewidth]{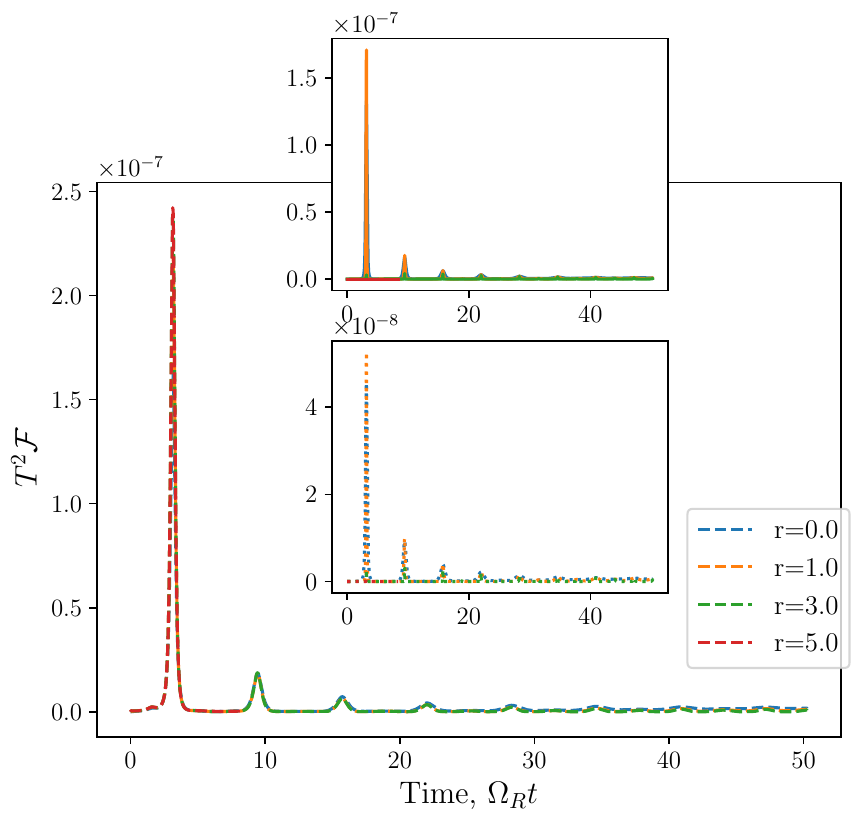}
  \caption{%
    Time dependence of the metrological sensitivity for displaced squeezed states
    $D(\alpha(r))S(r)|0\rangle$ at fixed total energy $E$, with
    $|\alpha(r)|^2=E/\oscR-\cosh(2r)/2$ and $r\in\{0,1,3,5\}$.
    As $r$ increases, energy is redistributed from displacement to squeezing via
    Eq.~\eqref{eq:energy-constraint}.
    Main panel: QFI.  Top inset: homodyne CFI.  Bottom inset: heterodyne CFI.
    Parameters: $\gamma\approx9.901\times10^{-3}$, $\Lambda=100$, $\osc=1$, $\Omega_R \approx 0.99$,
    $T=10^{-3}$.
  }
  \label{fig:diff-r}
\end{figure}

A striking hierarchy emerges at short times.
Because the QFI starts from zero for every $r$, the relevant quantity is the rate
of its early build-up, and this is set by the suppressed position variance: with
$\sigma_{11}(0)=e^{-2r}/(2\oscR)$ a factor $e^{-2r}$ below the vacuum level, the
position-quadrature sensitivity $(\partial_T\sigma_{11})^2/\sigma_{11}^2$ is
enhanced by $e^{4r}$ as soon as the bath imprints a temperature dependence on
$\sigma_{11}$.
The strongly squeezed probe therefore reaches a sizeable sensitivity far earlier
than the less squeezed one and, in the non-Markovian dynamics, its first revival peak
stands well above the vacuum curve.
This head start is temporary: the interaction with the bath degrades the squeezing,
and at long times all curves converge to the common steady-state QFI, a property of
the equilibrium Gibbs state that carries no memory of the initial preparation.
Between these limits the exact QFI attains a maximum at a finite optimal time
$t^*(r)>0$, which establishes squeezing as a genuine thermometric resource and
which moves to earlier times as $r$ increases, since more strongly squeezed probes
must be read out before thermalisation erases their non-classical correlations.
This is the qualitative signature of non-Markovianity.
The Markovian QFI is qualitatively different.
A squeezed probe evolving under the GKLS dynamics also starts from $\QFI(0)=0$ and
rises monotonically to the same steady-state value (Proposition~\ref{prop:iso} and
Lemma~\ref{lem:asymp-mono}, with the approach always from below): there is no
interior maximum and no early-time optimum, so its optimum is again $t^*\to\infty$
and the squeezing can change the rate at which the Markovian QFI builds up, but it cannot create a finite-time optimum or outperform the stationary Markovian value.
The revivals of the exact dynamics, by contrast, create a genuine interior optimum
at finite $t^*(r)>0$ with no Markovian counterpart.
The two complementary resources therefore act in concert only out of the Markovian
regime: the squeezing $r$ sets how fast the transient sensitivity builds, and the
non-Markovian dynamics supplies the revival window in which that sensitivity can be
captured before it decays.
The measurement behaves as before: the homodyne Fisher information tracks the QFI
closely at every squeezing level, while the heterodyne curve falls increasingly far
below it as $r$ grows, since strong squeezing concentrates the temperature
information almost entirely in the position quadrature, where a balanced
two-quadrature measurement is wasteful.

Finally, we trace how the transient QFI builds up as the coupling is increased.
Figure~\ref{fig:diff-gammas} shows it on a log-log scale for
$\gamma\in\{10^{-6},\,10^{-4},\,2\times10^{-4},\,10^{-3},\,2\times10^{-3},\,
4.975\times10^{-3},\,9.901\times10^{-3}\}$ at fixed $T=10^{-3}$.

\begin{figure}[h]
  \centering
  \includegraphics[width=0.6\linewidth]{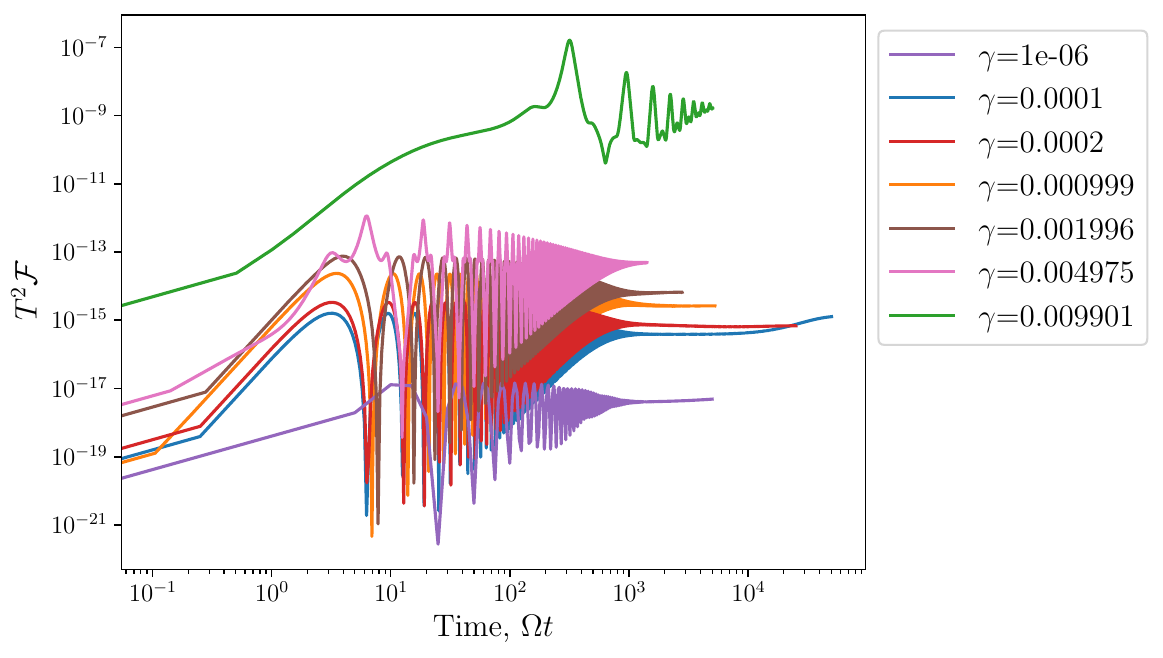}
  \caption{%
    Log-log plot of the QFI as a function of time for coupling strengths
    $\gamma$ (see legend), at $T=10^{-3}$,
    $\Lambda=100$, $\Omega=1$, and $\Omega_R \approx 0.99$.
    Initial state: ground state $(r=|\alpha|=0)$.
  }
  \label{fig:diff-gammas}
\end{figure}

At the weakest coupling $(\gamma=10^{-6})$ the probe is essentially Markovian, and
the QFI rises smoothly and monotonically with no appreciable oscillations.
Increasing $\gamma$ has two compounding effects.
The bath memory kernel~\eqref{eq:langevin} grows in importance, so that the
non-Markovian oscillations in the QFI set in earlier and with larger amplitude; and
the peak transient QFI itself rises by many orders of magnitude, exceeding the
weak-coupling value at the strongest coupling near the stability
boundary~\eqref{eq:stability}.
The latter enhancement can be traced to the renormalized frequency
$\oscR^2=\Omega^2-\gamma\Lambda$: as $\gamma\to\Omega^2/\Lambda$ one has
$\oscR\to 0$, so that even at $T=10^{-3}$ the probe operates in the regime
$T\gg\oscR$ where its thermal fluctuations are effectively classical and its
temperature sensitivity is correspondingly large.
On the log-log scale the curves illustrate the rapid enhancement of the transient
QFI as the stability boundary is approached. We do not assign a universal scaling
exponent to this transient envelope, since it depends on the finite-time window
and on the oscillatory revival structure.

\section{Discussion and conclusions}
\label{sec:discussion}

The superior performance of the exact solution comes from mode hybridisation.
Because the total Hamiltonian~\eqref{eq:H-total} is quadratic, the dynamics of the
combined probe--bath system decomposes into the eigenmodes of the whole, and these
eigenmodes are hybrid superpositions of the bare probe oscillator and the bath
oscillators.
It is in this hybridised basis that the problem is exactly solvable, and it is the
hybridised modes, not the bare probe, that the dressed frequency $\oscR$ and the
effective damping frequency $\effW$ describe; the Green's function
$G(t)=e^{-\gamma t/2}\sin(\effW t)/\effW$ encodes their structure in full.
A change in temperature alters the occupation of all of these hybridised modes, and
through the coupling the probe coordinate $q(t)$ inherits temperature sensitivity
from the bath modes as well as from its own.
This is the common origin of the two enhancements we report: the polynomial
low-temperature scaling of the steady-state QFI in Sec.~\ref{subsec:steady} and
the non-Markovian revivals of Sec.~\ref{subsec:transient}, both of which reflect
the fact that information flowing from the probe into the hybridised modes is not
lost but can return coherently at later times.
The Markovian master equation, by contrast, works only with the reduced state of
the probe.
Tracing out the bath compresses the entire hybridisation into the three effective
parameters $\oscR$ and $\gamma_{\uparrow(\downarrow)}$ and replaces the explicit hybridised probe--bath dynamics by an effective reduced
semigroup, leaving a description that can only reproduce the
exponential low-temperature suppression of a bare oscillator of gap $\oscR$.
The revivals and the polynomial scaling are genuine physical effects with no
representation within the master equation, so the gap between the two descriptions
is one of kind, not of degree.

This reading also clarifies how the product-state assumption in
Eq.~\eqref{eq:initial-state} should be understood. It is not meant to describe a
global equilibrium state of the strongly coupled probe--bath compound. Rather, it
defines an operational quench protocol: the probe is prepared independently, the
bath is initially thermal, and the probe--bath interaction is switched on at
$t=0$. The work cost and possible initial slip associated with this switch-on are
not included in the thermometric figure of merit; they belong to the preparation
stage of the protocol. Once the interaction is switched on, the subsequent
build-up of probe--bath correlations, hybridisation, and memory effects is fully
captured by the exact quadratic dynamics.

The transient regime is best understood in terms of two resources.
The first is the interrogation time.
In the Markovian case the QFI rises monotonically from zero to the steady-state
value (Appendix~\ref{app:monotonicity}), so its optimum is always at the boundary
$t^*\to\infty$, with no interior maximum---and this holds whether the probe starts
in the ground state or in a strongly squeezed state.
The exact dynamics overturns this conclusion, since the revivals render the QFI
non-monotonic and make a finite interior $t^*>0$ strictly optimal, so that time
becomes a resource to be deliberately exploited.
The second resource is squeezing.
An initial squeezed state concentrates the probe energy into a reduced
position-quadrature variance and so builds up a large sensitivity quickly; this
advantage is transient, since thermalisation erodes the squeezing, but in the
non-Markovian regime the revivals keep the sensitivity elevated for longer, so that
a strongly squeezed probe read out at a well-chosen time can far surpass any
coherent or thermal state of the same energy.
The two resources are complementary: squeezing sets how fast the transient
sensitivity builds, and the non-Markovian dynamics widens the window in which it can
be harvested.
The analogy with phase estimation is instructive---there squeezing yields a
Heisenberg-scaling advantage, here a transient thermometric one---but with the
essential difference that the resource is consumed by thermalisation, which is
exactly why a non-equilibrium protocol, measuring the probe before equilibrium is
reached, is indispensable for exploiting non-classical states.

It is illuminating to set these findings beside the fermionic companion
study~\cite{RavellRodriguez2024}, and the parallels run deeper than one might at
first expect. Both probes display a non-monotonic transient QFI with
non-Markovian revivals, and in both cases these revivals can generate a genuine
finite-time advantage over simply waiting for the steady state. Both systems also
escape the exponential Boltzmann suppression at strong coupling: the steady-state
relative error diverges polynomially in either case, with comparable
exponents---here $(\eta\approx2.03)$, close to the value found in the fermionic
setting. These features therefore appear to be robust consequences of strongly
coupled quadratic probe--bath dynamics, rather than of the statistics of the
probe alone.

Where the bosonic probe genuinely differs is in its initial-state resources. Its
infinite-dimensional Hilbert space admits squeezed preparations, which can
accelerate the early build-up of temperature sensitivity by suppressing the
position-quadrature variance. This squeezing-induced enhancement has no analogue
for a single fermionic mode, where a lone occupation number leaves no room for
such a resource. A squeezing advantage in Gaussian thermometry was demonstrated, in a
complementary weak-coupling setting, by Ref.~\cite{Mirkhalaf2024}; what is new
here is its interplay with non-Markovian revivals and with the strong-coupling
frequency renormalisation, together with the monotonicity result of
Appendix~\ref{app:monotonicity} that pins down why the Markovian description
misses the finite optimal time.

Our conclusion that strong coupling helps should be read with the regime
dependence recently emphasised for a qubit probe~\cite{Tan2025} in mind, and the
two pictures are in fact consistent. The steady-state enhancement we report is
an ultralow-temperature effect, operative precisely in the regime $(T\ll\oscR)$
where Ref.~\cite{Tan2025} likewise finds strong coupling beneficial; at moderate
temperatures the exact and Markovian relative errors in
Fig.~\ref{fig:shot-noise} converge, leaving little steady-state strong-coupling
gain. The transient advantage found here has a different origin. It is not a
simple consequence of improving the steady-state signal-to-noise ratio, but of
combining strong-coupling-induced non-Markovian revivals with two bosonic
features inacessible to a single qubit: the position-quadrature squeezing of the
initial state, and the collapse of the renormalized gap $(\oscR\to0)$ as
$(\gamma\to\Omega^2/\Lambda)$, which pushes the probe into an effectively
classical, highly temperature-sensitive regime. Thus, in the transient protocol,
the role of strong coupling is to open a finite interrogation window through
memory revivals, while squeezing and gap renormalisation determine how much
temperature sensitivity can be harvested within that window. In this sense, our
results refine rather than contradict the qubit analysis.

These considerations point naturally to circuit quantum electrodynamics (cQED) as
the most promising experimental setting~\cite{Lvov2025}, since our model---a
harmonic oscillator coupled through its position to a Drude--Ohmic bosonic
bath---maps directly onto it.
The probe oscillator is realised by a microwave resonator, or by the harmonic mode
of a transmon in its linear regime, and the bath by whatever dissipative element
carries the temperature one wishes to estimate.
Two cases stand out.
The most direct is a resistive, normal-metal element coupled to the resonator,
whose Johnson--Nyquist noise furnishes an Ohmic bath with a Drude cut-off $\Lambda$
set by the $RC$ roll-off of the circuit; thermometry of such resistors, and more
generally of quasiparticle baths and normal-metal--insulator--superconductor
junctions at millikelvin temperatures, is exactly the goal of the quantum
calorimetry programme pursued by Pekola's group and
others~\cite{Lvov2025}.
A more ambitious target is the phonon temperature of the substrate itself: the
acoustic modes of the silicon or sapphire substrate exchange energy with the
superconducting film through anharmonic or piezoelectric coupling and could play
the role of the bosonic bath, with the resonator as probe, although reaching and
controlling the requisite strong coupling---and engineering an effective
Drude--Ohmic phonon--resonator spectral density---remains an open experimental
problem.
In either realisation, the squeezed probe states required by our protocol can be
generated with Josephson parametric amplifiers and injected into the resonator, and
the fast switch-on underlying the product initial state can be implemented by a
flux pulse on a tunable coupler.
The chief practical difficulty is one of timing: because the optimal interrogation
time $t^*(r)$ shrinks with increasing squeezing and may fall below the
thermalisation time $1/\gamma$, the protocol calls for time-resolved homodyne
detection at nanosecond resolution, which is nonetheless within reach of current
cQED technology.

Several questions remain open, and we close by sketching the most compelling of
them, with the caveat that each carries a substantial computational price: even
within the exactly solvable Boyanovsky--Jasnow framework the second moments of the
probe are already given by integrals whose integrands span several printed pages
(see Appendix~\ref{app:exact}) and demand careful numerical evaluation, so that
every extension below will be more demanding still. The most immediate extension
is to enlarge the probe from a single oscillator to an array of (N) coupled
oscillators sharing one bath. Already for two oscillators, inter-probe
correlations and collective hybridised modes should reshape both the transient
dynamics and the steady-state scaling. The central question is whether such an
array can achieve a Heisenberg-like scaling of the QFI with (N), and whether
bath memory can amplify this scaling further. Remaining with a single mode, one may instead broaden the class of initial states.
We confined ourselves to Gaussian states because they are experimentally natural
and closed under the quadratic dynamics, but non-Gaussian preparations such as
Fock or cat states might offer larger transient gains, raising the problem of
identifying the optimal probe state at fixed energy. The measurement side invites
a parallel generalization: the homodyne and heterodyne schemes used here are
practical, but not optimal in general, and it would be valuable to characterise
the optimal transient measurement and to ask whether it can be realised by photon
counting or related techniques. The model itself can also be pushed in two
directions. Replacing the Drude--Ohmic density by a sub- or super-Ohmic one would
test how the bath exponent controls the polynomial low-temperature scaling, and
in particular the $(\eta=2s)$ relation argued in Sec.~\ref{subsec:steady}.
Replacing the position-like coupling by a dispersive or Susskind--Glogower-type
nonlinear interaction would instead test whether the hybridisation-induced
enhancement survives beyond the exactly solvable linear regime. Finally, one may
relax the product initial condition and ask how correlations established before
the protocol begins affect the achievable precision. Underlying all of these extensions is a broader question about the operational
role of bath memory in thermometry. In the parameter regime studied here, the
largest transient enhancements occur together with pronounced memory revivals:
information that has flowed from the probe into the bath returns at later times
and temporarily increases the temperature sensitivity. This interpretation is
consistent with earlier work relating non-Markovian dynamics to metrological
performance and with approaches based on quantum Fisher-information flow or
Fisher-information contraction as probes of non-Markovianity
\cite{Lu2010,Chin2012,Abiuso2023,Parlato2025}. It would therefore be natural to
make this connection quantitative in the present bosonic thermometry setting, for
example by comparing the transient QFI enhancement with standard
non-Markovianity measures such as the Breuer--Laine--Piilo trace-distance
measure or the Rivas--Huelga--Plenio divisibility measure
\cite{Breuer2009,Rivas2010}. Such a comparison would clarify whether the
temperature QFI considered here is merely correlated with bath memory, or whether
it can be turned into an operational witness of non-Markovian information
backflow.

In summary, a quantum harmonic oscillator strongly coupled to a Drude--Ohmic
bosonic bath supports two qualitatively distinct thermometric regimes, and both
favour strong coupling, with the caveats on regime and figure of merit noted above.
At equilibrium, the exponential Boltzmann suppression of the relative error gives
way to the polynomial divergence of Eq.~\eqref{eq:exact-scaling}, a direct
imprint of mode hybridisation.
Out of equilibrium, the bath memory produces QFI revivals that single out a
finite optimal interrogation time, and squeezed preparations contribute a large,
though perishable, transient advantage; the renormalized GKLS
description---monotone, with no interior optimum
(Appendix~\ref{app:monotonicity})---captures neither effect, and a plain homodyne
measurement remains near-optimal throughout.
The mechanism underlying all of these findings, the hybridisation of probe and
bath modes in a quadratic Hamiltonian, is universal and should persist across
bosonic platforms; and since the model maps directly onto circuit quantum
electrodynamics, the protocols analysed here appear within reach of current
superconducting-circuit technology.
We hope that these results will motivate non-equilibrium bosonic thermometry
experiments in superconducting circuits, and that they will stimulate further
theoretical work on the roles of non-Markovianity, hybridisation, and
non-classical resources in quantum sensing.

\begin{acknowledgments}
MW and MH are supported by the QuantERA II Programme (No 2021/03/Y/ST2/00178, acronym ExTRaQT) that has received funding from the European Union’s Horizon 2020 and from Polish National Science Center. MW
acknowledges grant PRELUDIUM-20 (grant number:
2021/41/N/ST2/01349) from the National Science Cen-
ter. 
RRR acknowledges support from the Government of Spain (Severo Ochoa
CEX2019-000910-S and FUNQIP), Fundaci\'o Cellex, Fundaci\'o Mir-Puig, and
Generalitat de Catalunya (CERCA program).
\end{acknowledgments}

\bibliographystyle{apsrev4-2}
\bibliography{references}

\appendix

\section{Exact solution: covariance matrix dynamics}
\label{app:exact}

This appendix collects the full analytical solution for the time evolution of the
probe's first and second moments, following the Schr\"odinger-picture construction
of Sec.~\ref{subsec:exact} and Ref.~\cite{BoyanoskyJasnow2017}.
We reproduce the central integrand in its original form, as it documents the scale
of the analytical and numerical effort underlying the figures of
Sec.~\ref{sec:results}.

The starting point is the retarded Green's function (implified form) of the damped probe,
\begin{align}
  G(t) \;=\; e^{-\gamma t/2}\,\frac{\sin(\effW t)}{\effW},
  \label{eq:green}
\end{align}
with $\effW = \sqrt{\oscR^2 - \gamma^2/4}$, which satisfies
\begin{align}
  G(0) = 0, \qquad \dot{G}(0) = 1, \qquad \ddot{G}(0) = 0,
  \label{eq:green-ic}
\end{align}
and whose derivatives are
\begin{align}
  \dot{G}(t) &= e^{-\gamma t/2}
    \left(\cos\effW t - \frac{\gamma}{2\effW}\sin\effW t\right),
  \label{eq:Gdot}\\
  \ddot{G}(t) &= -e^{-\gamma t/2}
    \left[\effW^2\,\frac{\sin\effW t}{\effW}
          + \gamma\!\left(\cos\effW t
          - \frac{\gamma}{2\effW}\sin\effW t\right)\right],
  \label{eq:Gddot}
\end{align}
yet the exact solution requires $\ddot{G}(0)=0$, hence the simplified for of the Green's function is insufficient for the exact solution, and subleading terms need to be incorporated. 
\begin{remark}[Finite-cutoff correction used in the numerics]
\label{rem:finite-cutoff-green}
The leading large-cutoff Green function~\eqref{eq:green} captures the
underdamped resonant dynamics but does not reproduce the short-time
curvature of the exact finite-memory dynamics: it gives
$\ddot G(0)=-\gamma$, whereas the memory kernel enforces $\ddot G(0)=0$.
In the numerics we therefore retain, in addition, the leading contribution
of the rapidly decaying Drude pole,
\begin{equation}
  G(t)
  = \frac{\gamma}{\Lambda^2}\,e^{-\Lambda t}
  + e^{-\gamma t/2}\,\frac{\sin(\effW t)}{\effW}.
  \label{eq:green-numerical}
\end{equation}
Although exponentially suppressed on the bath-correlation timescale
$\Lambda^{-1}$, this term restores the correct initial curvature,
$\ddot G(0)=0$, and thereby improves the short-time behaviour.
The price is a small mismatch with the exact initial data $G(0)=0$,
$\dot G(0)=1$,
\[
  G(0)=\frac{\gamma}{\Lambda^2},
  \qquad
  \dot G(0)=1-\frac{\gamma}{\Lambda},
\]
of order $\Lambda^{-2}$ and $\Lambda^{-1}$ respectively---higher-order
finite-cutoff corrections that we do not subtract.
\end{remark}

In terms of $G(t)$, the probe coordinate evolves as
\begin{align}
  q(t) \;=\; \dot{G}(t)\,q(0) + G(t)\,p(0)
        + \int_0^t ds\; G(t-s)\,\xi(s),
  \label{eq:q-solution}
\end{align}
so that, with the initial first moments $Q_0=\langle q(0)\rangle$ and
$P_0=\langle p(0)\rangle$ and using $\langle\!\langle\xi(t)\rangle\!\rangle=0$, the
mean values are
\begin{align}
  \langle q(t)\rangle &= \dot{G}(t)\,Q_0 + G(t)\,P_0,
  \label{eq:q-mean}\\
  \langle p(t)\rangle &= \ddot{G}(t)\,Q_0 + \dot{G}(t)\,P_0.
  \label{eq:p-mean}
\end{align}
For the ground state, and indeed for any squeezed vacuum, $Q_0=P_0=0$ and the
first moments vanish at all times.

The noise operator $\xi(t)$ has zero mean and the two-time correlation function
\begin{align}
  \langle\!\langle \xi(t_1)\,\xi(t_2)\rangle\!\rangle
  \;=\;
  \frac{1}{\pi}
  \int_{-\infty}^{+\infty} d\omega\; J(\omega)\,n(\omega)\,
  e^{i\omega(t_1-t_2)},
  \label{eq:xi-correlator}
\end{align}
with $n(\omega)=(e^{\omega/T}-1)^{-1}$ and $\langle\!\langle\,\cdot\,\rangle\!\rangle$
the average over the bath state $\pi_B$.
For the Drude--Ohmic density~\eqref{eq:spectral-Drude} the integrand has poles on
the imaginary axis and, at large $\Lambda$, evaluates by residues to a sum of
decaying exponentials.

The second moments then follow from Eq.~\eqref{eq:q-solution} and its time
derivative.
Writing $QQ_0=\Tr[q^2(0)\rho_S(0)]$, $PP_0=\Tr[p^2(0)\rho_S(0)]$, and
$PQ_0=\Tr[q(0)p(0)\rho_S(0)]$ for the initial second moments, one obtains
\begin{align}
  \langle q^2(t)\rangle
  &=\; G^2(t)\,PP_0 + G(t)\dot{G}(t)\,(2PQ_0 + i)
     + \dot{G}^2(t)\,QQ_0
  \notag\\
  &\quad+\;
  \frac{1}{\pi}\int_{-\infty}^{+\infty}d\omega\;
  J(\omega)\,n(\omega)
  \int_0^t \!dt_1\!\int_0^t \!dt_2\;
  G(t-t_1)\,G(t-t_2)\,e^{i\omega(t_1-t_2)},
  \label{eq:q2}
\end{align}
\begin{align}
  \langle p^2(t)\rangle
  &=\; \dot{G}^2(t)\,PP_0 + \dot{G}(t)\ddot{G}(t)\,(2PQ_0+i)
     + \ddot{G}^2(t)\,QQ_0
  \notag\\
  &\quad+\;
  \frac{1}{\pi}\int_{-\infty}^{+\infty}d\omega\;
  J(\omega)\,n(\omega)
  \int_0^t \!dt_1\!\int_0^t \!dt_2\;
  \dot{G}(t-t_1)\,\dot{G}(t-t_2)\,e^{i\omega(t_1-t_2)},
  \label{eq:p2}
\end{align}
\begin{align}
  \langle q(t)p(t)\rangle
  &=\; G(t)\dot{G}(t)\,PP_0
     + G(t)\ddot{G}(t)\,PQ_0
     + \dot{G}^2(t)\,(PQ_0+i)
     + \dot{G}(t)\ddot{G}(t)\,QQ_0
  \notag\\
  &\quad+\;
  \frac{1}{\pi}\int_{-\infty}^{+\infty}d\omega\;
  J(\omega)\,n(\omega)
  \int_0^t \!dt_1\!\int_0^t \!dt_2\;
  G(t-t_1)\,\dot{G}(t-t_2)\,e^{i\omega(t_1-t_2)}.
  \label{eq:qp}
\end{align}

The double time integrals can be reduced to single frequency integrals by
integration by parts.
Collecting the contributions to $\langle q^2(t)\rangle$ and
$\langle p^2(t)\rangle$ into the auxiliary integral
\begin{align}
  I(t)
  \;=\;
  \frac{1}{\pi}\int_{-\infty}^{+\infty}d\omega\; J(\omega)\,n(\omega)
  \Bigg[
    &\frac{1}{2\effW^2}\,e^{-\gamma t}\sin^2(\effW t)
  \notag\\
    +\;\frac{\omega^2+\oscR^2}{4\effW^2}\,e^{-\gamma t/2}
    \bigg|
    &t\,e^{it\effW/2}\,
      \mathrm{sinc}\!\left(\tfrac{\omega-\effW}{2}t-i\tfrac{\gamma}{4}t\right)
  \notag\\
    -\;
    &t\,e^{-it\effW/2}\,
      \mathrm{sinc}\!\left(\tfrac{\omega+\effW}{2}t-i\tfrac{\gamma}{4}t\right)
    \bigg|^2
  \Bigg],
  \label{eq:It}
\end{align}
the second moments are expressed through $I(t)$ and its limiting forms.
For the Drude--Ohmic density~\eqref{eq:spectral-Drude} the integrand of
Eq.~\eqref{eq:It} is an extraordinarily long closed-form expression. To give the
reader a sense of the scale of the analytic objects handled in this work, we
reproduce a representative portion below, in the variable $x=\omega$:

\small
\begin{align}
&2\Bigg(
\frac{e^{-t\Lambda}(\cosh t\Lambda - \cos tx)\,\gamma^2}
     {\Lambda^4(x^2+\Lambda^2)}
\notag\\
&+
\frac{e^{-\frac{t}{2}(\gamma+2\Lambda)}}
     {\effW\bigl(16\effW^4+8(\gamma^2-4x^2)\effW^2+(4x^2+\gamma^2)^2\bigr)
      \Lambda^2(x^2+\Lambda^2)}
\Big[
8\sin(t\effW)x^4
-8e^{\frac{t\gamma}{2}}\effW\sin(tx)x^3
\notag\\
&\quad
-4e^{t\Lambda}\gamma\sin(t\effW)\sin(tx)x^3
+8e^{t\Lambda}\Lambda\sin(t\effW)\sin(tx)x^3
+8e^{\frac{t\gamma}{2}+t\Lambda}\effW\gamma x^2
-8e^{\frac{t\gamma}{2}+t\Lambda}\effW\Lambda x^2
\notag\\
&\quad
-8\effW^2\sin(t\effW)x^2
+2\gamma^2\sin(t\effW)x^2
+4\gamma\Lambda\sin(t\effW)x^2
+8e^{\frac{t\gamma}{2}}\effW^3\sin(tx)x
+2e^{\frac{t\gamma}{2}}\effW\gamma^2\sin(tx)x
\notag\\
&\quad
-8e^{\frac{t\gamma}{2}}\effW\gamma\Lambda\sin(tx)x
-e^{t\Lambda}\gamma^3\sin(t\effW)\sin(tx)x
-4e^{t\Lambda}\effW^2\gamma\sin(t\effW)\sin(tx)x
\notag\\
&\quad
-8e^{t\Lambda}\effW^2\Lambda\sin(t\effW)\sin(tx)x
+2e^{t\Lambda}\gamma^2\Lambda\sin(t\effW)\sin(tx)x
+8e^{\frac{t\gamma}{2}+t\Lambda}\effW^3\Lambda
+2e^{\frac{t\gamma}{2}+t\Lambda}\effW\gamma^2\Lambda
\notag\\
&\quad
-2e^{\frac{t\gamma}{2}}\effW\bigl(4(\gamma-\Lambda)x^2+(4\effW^2+\gamma^2)\Lambda\bigr)\cos(tx)
+\gamma^3\Lambda\sin(t\effW)
+4\effW^2\gamma\Lambda\sin(t\effW)
\notag\\
&\quad
+e^{t\Lambda}\bigl(\effW^2(8x^2-4\gamma\Lambda)-(4x^2+\gamma^2)(2x^2+\gamma\Lambda)\bigr)\cos(tx)\sin(t\effW)
\notag\\
&\quad
-2\effW\cos(t\effW)\Big(\!-4\Lambda\effW^2-4x^2\gamma+4x^2\Lambda-\gamma^2\Lambda
-e^{t\Lambda}\bigl(-4(\gamma-\Lambda)x^2-(4\effW^2+\gamma^2)\Lambda\bigr)\cos(tx)
\notag\\
&\quad\qquad
+e^{t\Lambda}x\bigl(4\effW^2-4x^2+\gamma^2-4\gamma\Lambda\bigr)\sin(tx)\Big)
\Big]\gamma
\;+\;\cdots
\Bigg),
\label{eq:integrand}
\end{align}
\normalsize
Every term of this integrand has the same anatomy: a rational prefactor in $\omega$,
whose denominator combines the bath factor $\omega^2+\Lambda^2$ with the quartic
resonance denominator
$16\effW^4+8(\gamma^2-4\omega^2)\effW^2+(4\omega^2+\gamma^2)^2
=(2\effW-i\gamma-2\omega)(2\effW+i\gamma-2\omega)(2\effW-i\gamma+2\omega)(2\effW+i\gamma+2\omega)$,
multiplied by an exponential factor in $t$ (built from $e^{-\gamma t}$,
$e^{-\gamma t/2}$, and $e^{-\Lambda t}$ and their products) and by a product of two
oscillatory factors, one drawn from $\{\sin,\cos\}(\effW t)$ and one from
$\{\sin,\cos\}(\omega t)$. The denominator thus carries the four dressed-mode poles
$\omega=\pm\effW\pm i\gamma/2$ together with the bath poles $\omega=\pm i\Lambda$. The
bracket multiplying $\gamma$ above is a single representative block---a compact,
partially collected excerpt of the full expression. The ellipsis $\cdots$ is not
cosmetic: written out as a sum over these simple poles, the complete position-variance
integrand comprises, by direct count, \emph{eighty-eight} terms of exactly this
structure, organised by descending powers of $\gamma$ from $\gamma^6$ to $\gamma^0$;
the companion momentum-variance integrand is of the same character and comparable
length. Closed-form expressions of this size cannot be integrated efficiently within a
high-level computer-algebra system across the ranges of temperature and squeezing we
explore. We therefore generated them symbolically, transcribed the resulting frequency
integrands to C\texttt{++}/CUDA, and evaluated the integrals over $\omega$ numerically
on a GPU~\cite{WinczewskiCUDA2024}. We monitored the numerical accuracy by progressively increasing the
recursion depth of the adaptive $\omega$-quadrature until the resulting curves no
longer changed at the resolution of the figures.

The covariance-matrix elements entering the QFI~\eqref{eq:qfi-gaussian} are then
\begin{align}
  \sigma_{11}(t) &= \langle q^2(t)\rangle - \langle q(t)\rangle^2,
  \label{eq:sigma11}\\
  \sigma_{22}(t) &= \langle p^2(t)\rangle - \langle p(t)\rangle^2,
  \label{eq:sigma22}\\
  \sigma_{12}(t) &= \tfrac{1}{2}\langle q(t)p(t)+p(t)q(t)\rangle
                   - \langle q(t)\rangle\langle p(t)\rangle,
  \label{eq:sigma12}
\end{align}
and substituting Eqs.~\eqref{eq:q-mean}--\eqref{eq:qp} gives them in closed form as
functions of $t$, $T$, $\gamma$, $\Lambda$, and the initial moments
$QQ_0,PP_0,PQ_0$.
In the limit $t\to\infty$ the transient terms decay as $e^{-\gamma t/2}$, the double
integrals converge, and the asymptotic covariance matrix
$\boldsymbol{\sigma}^\mathrm{ss}$, obtained from Eq.~\eqref{eq:integrand} by
residues, fixes the steady-state QFI shown in Fig.~\ref{fig:shot-noise}.

Now, we move to the steady-state regime. The stationary second moments are obtained directly from the exact solution in the
long-time limit $t,t'\gg1/\gamma$, where only the noise term of
Eq.~\eqref{eq:q-solution} survives. Carrying out the frequency integral of the
resulting correlator for the Drude--Ohmic density, one finds, following
Ref.~\cite{BoyanoskyJasnow2017} [their Eqs.~(II.39)--(II.40)],
\begin{align}
  \sigma_{11}^\mathrm{ss}
  &= \langle q^2\rangle
   = \frac{1}{2\pi}\int_{-\infty}^{\infty} d\omega\;
     \frac{J(\omega)\,\coth(\omega/2T)}
          {(\omega^2-\oscR^2)^2+(\gamma\omega)^2},
  \label{eq:fdt-qq}\\
  \sigma_{22}^\mathrm{ss}
  &= \langle p^2\rangle
   = \frac{1}{2\pi}\int_{-\infty}^{\infty} d\omega\;
     \frac{\omega^2\,J(\omega)\,\coth(\omega/2T)}
          {(\omega^2-\oscR^2)^2+(\gamma\omega)^2},
  \label{eq:fdt-pp}\\
  \sigma_{12}^\mathrm{ss} &= 0,
  \label{eq:fdt-qp}
\end{align}
the last following by stationarity (no net phase-space flux). These are the fluctuation--dissipation correlators built from the retarded
susceptibility
\[
  \chi(\omega)=\bigl[\oscR^2-\omega^2-i\gamma\omega\bigr]^{-1},
\]
valid for $\Lambda\gg\oscR,\gamma$. Crucially, they depend on the renormalized
frequency $\oscR^2=\osc^2-\gamma\Lambda$---consistent with the static value
$\chi(0)=1/\oscR^2$---and not on the bare frequency $\osc$. Closing the contour in
the upper half-plane and summing the residues at the resonant poles
$\omega=i\gamma/2\pm\effW$ and at the bosonic Matsubara poles
\[
  \omega=2\pi i\,\ell T,\qquad \ell=1,2,\ldots,
\]
of $\coth(\omega/2T)$ yields the closed expressions of
Ref.~\cite{BoyanoskyJasnow2017} [their Eqs.~(II.41) and~(II.43)]: a finite sum of
resonant contributions plus a temperature-dependent Matsubara series. In particular,
at low temperature the thermal part of the stationary position variance scales as
\[
  \sigma_{11}^{\mathrm{ss}}(T)-\sigma_{11}^{\mathrm{ss}}(0)\sim T^2,
\]
which underlies the exponent $\eta=2$ discussed in Sec.~\ref{subsec:steady}.


\section{Markovian approximation: analytical solution}
\label{app:markov}

For comparison with the exact dynamics we solve the (renormalized) GKLS master
equation of Sec.~\ref{subsec:markov} in closed form. Writing it as
\begin{equation}
  \partial_t\rho
  =-i\big[\oscR\,a^\dagger a,\rho\big]
   +\gamma_\downarrow\,\mathcal D[a]\rho
   +\gamma_\uparrow\,\mathcal D[a^\dagger]\rho,
  \qquad
  \mathcal D[A]\rho=A\rho A^\dagger-\tfrac12\{A^\dagger A,\rho\},
  \label{eq:markov-me}
\end{equation}
with the ladder operators of Eq.~\eqref{eq:ladder} and the rates
\begin{equation}
  \gamma_\downarrow=\gamma\,[\bar n+1],
  \qquad
  \gamma_\uparrow=\gamma\,\bar n,
  \qquad
  \bar n \equiv \bar n(\oscR)=\frac{1}{e^{\oscR/T}-1},
  \label{eq:markov-rates}
\end{equation}
matching Eq.~\eqref{eq:rates} of the main text, so that the net relaxation rate
\begin{equation}
  \gamma=\gamma_\downarrow-\gamma_\uparrow
  \label{eq:markov-gamma}
\end{equation}
is independent of $T$ and equals the coupling constant of
Eq.~\eqref{eq:spectral-Drude}, in agreement with the exact decay $e^{-\gamma t}$ of
Appendix~\ref{app:exact}. The renormalization of the above equation follows the method in Ref.~\cite{WinczewskiAlicki2021}. We choose its
renormalized frequency $(\Omega_R)$ to coincide with that appearing in
the exact solution, allowing a direct comparison between the
two descriptions. Because the probe remains Gaussian, its state is fixed at all times by its
first moments and its covariance matrix, which we now compute in the Heisenberg
picture by evolving observables under the adjoint generator $\Lind^\dagger$.

The renormalized generator~\eqref{eq:markov-me} is the secular limit of the
time-local, refined weak-coupling (cumulant) generator of
Refs.~\cite{WinczewskiAlicki2021,Lobejko2024}, and it is this limit that renders the
closed-form solution transparent. The system coupling operator $q\propto a+a^\dagger$
decomposes into the eigenoperators $A(+\oscR)=a$ and $A(-\oscR)=a^\dagger$ of the
free Hamiltonian, carrying Bohr frequencies $\mp\oscR$. A general weak-coupling
generator pairs these as $A(\omega)\,\rho\,A^\dagger(\omega')$ with
$\omega,\omega'\in\{\pm\oscR\}$; the \emph{non-secular} pairings $\omega\neq\omega'$
generate anomalous terms $\propto a\rho a$ and $a^\dagger\rho a^\dagger$ that
oscillate as $e^{\pm 2i\oscR t}$ and couple the occupation to the squeezing sector.
Retaining only the secular pairings $\omega=\omega'$, as in
Eq.~\eqref{eq:markov-me}, removes this coupling, and the adjoint generator then
closes separately on each eigenoperator sector---which is precisely the structure we
exploit. 

With the Heisenberg dissipator
$\mathcal D^\dagger[A]\,O=A^\dagger O A-\tfrac12\{A^\dagger A,O\}$, the adjoint
generator
$\Lind^\dagger[O]=i[\oscR a^\dagger a,O]
+\gamma_\downarrow\,\mathcal D^\dagger[a]\,O
+\gamma_\uparrow\,\mathcal D^\dagger[a^\dagger]\,O$
(the Lamb shift absorbed into $\oscR)$ acts on the linear sector as
\begin{equation}
  \Lind^\dagger(a)=-\Big(i\oscR+\tfrac{\gamma}{2}\Big)a,
  \qquad
  \Lind^\dagger(a^\dagger)=\Big(i\oscR-\tfrac{\gamma}{2}\Big)a^\dagger,
  \label{eq:markov-adjoint-lin}
\end{equation}
and, using $aa^\dagger=1+a^\dagger a$, on the quadratic sector as
\begin{equation}
  \Lind^\dagger(a^2)=-\big(2i\oscR+\gamma\big)\,a^2,
  \quad
  \Lind^\dagger(a^\dagger a)=-\gamma\,a^\dagger a+\gamma_\uparrow,
  \quad
  \Lind^\dagger(a^{\dagger 2})=\big(2i\oscR-\gamma\big)\,a^{\dagger 2}.
  \label{eq:markov-adjoint-quad}
\end{equation}
Equivalently, the vector of quadratic monomials
$\mathbf O=(a^2,\,a^\dagger a,\,aa^\dagger,\,a^{\dagger 2})^{\!\top}$ obeys the closed
linear system $\dot{\mathbf O}=\mathsf M\,\mathbf O$, with
\begin{equation}
  \mathsf M=
  \begin{pmatrix}
    -(2i\oscR+\gamma) & 0 & 0 & 0\\[2pt]
    0 & -\gamma_\downarrow & \gamma_\uparrow & 0\\[2pt]
    0 & -\gamma_\downarrow & \gamma_\uparrow & 0\\[2pt]
    0 & 0 & 0 & 2i\oscR-\gamma
  \end{pmatrix}.
  \label{eq:markov-matrix}
\end{equation}
The block-diagonal form is the fingerprint of the secular limit: the anomalous
entries $a^2$ and $a^{\dagger 2}$ decouple from the occupation block
$\{a^\dagger a,\,aa^\dagger\}$, whereas the non-secular terms discarded above would
populate the off-diagonal blocks. The degenerate occupation block preserves the
canonical constraint $aa^\dagger-a^\dagger a=1$ and drives
$\langle a^\dagger a\rangle$ to its thermal value $\bar n=\gamma_\uparrow/\gamma$.
Integrating Eqs.~\eqref{eq:markov-adjoint-lin}--\eqref{eq:markov-adjoint-quad},
\begin{align}
  a(t)&=e^{-(i\oscR+\gamma/2)t}\,a,
  &
  a^2(t)&=e^{-(2i\oscR+\gamma)t}\,a^2,
  \notag\\[2pt]
  (a^\dagger a)(t)&=e^{-\gamma t}\,a^\dagger a+\bar n\big(1-e^{-\gamma t}\big),
  &
  a^{\dagger 2}(t)&=e^{(2i\oscR-\gamma)t}\,a^{\dagger 2},
  \label{eq:markov-ops}
\end{align}
together with the Hermitian conjugates of the first row; the occupation relaxes
monotonically to $\bar n$ at rate $\gamma$.

Inserting Eq.~\eqref{eq:markov-ops} into the quadratures~\eqref{eq:ladder}, the
first moments perform a damped phase-space rotation,
\begin{align}
  \langle q(t)\rangle
  &=e^{-\gamma t/2}\Big[\cos(\oscR t)\,Q_0
     +\tfrac{1}{\oscR}\sin(\oscR t)\,P_0\Big],
  \notag\\
  \langle p(t)\rangle
  &=e^{-\gamma t/2}\Big[\cos(\oscR t)\,P_0-\oscR\,\sin(\oscR t)\,Q_0\Big],
  \label{eq:markov-first}
\end{align}
with $Q_0=\langle q(0)\rangle$ and $P_0=\langle p(0)\rangle$; compactly
$\mathbf d_t=e^{-\gamma t/2}\,\mathcal R_{\oscR t}\,\mathbf d_0$, where
$\mathcal R_{\oscR t}$ is the symplectic rotation generated by the Hamiltonian part.
Being $T$-independent and decaying to $\mathbf 0$, they carry no thermometric
information.

The second moments follow from
$q^2=(a^2+a^{\dagger 2}+2a^\dagger a+1)/(2\oscR)$,
$p^2=\oscR(2a^\dagger a+1-a^2-a^{\dagger 2})/2$, and
$qp=\tfrac{i}{2}(1+a^{\dagger 2}-a^2)$. Denoting the initial second moments
$QQ_0=\langle q^2(0)\rangle$, $PP_0=\langle p^2(0)\rangle$, and
$PQ_0=\langle q(0)p(0)\rangle$, the anomalous expectations
$\langle a^2(t)\rangle=e^{-(2i\oscR+\gamma)t}\langle a^2(0)\rangle$ rotate at $2\oscR$
as they decay. For the rotationally symmetric preparations used for the Markovian
curves---ground, vacuum, and thermal states, for which
$\langle a^2(0)\rangle=0$---these terms drop and the second moments reduce to
\begin{align}
  \langle q^2(t)\rangle
  &= e^{-\gamma t}\,QQ_0
    +\frac{\coth(\oscR/2T)}{2\oscR}\big(1-e^{-\gamma t}\big)
  \xrightarrow[t\to\infty]{}\ \frac{\coth(\oscR/2T)}{2\oscR},
  \label{eq:markov-qq}\\
  \langle p^2(t)\rangle
  &= e^{-\gamma t}\,PP_0
    +\frac{\oscR\coth(\oscR/2T)}{2}\big(1-e^{-\gamma t}\big)
  \xrightarrow[t\to\infty]{}\ \frac{\oscR\coth(\oscR/2T)}{2},
  \label{eq:markov-pp}\\
  \langle q(t)p(t)\rangle
  &= e^{-\gamma t}\,PQ_0+\frac{i}{2}\big(1-e^{-\gamma t}\big)
  \xrightarrow[t\to\infty]{}\ \frac{i}{2},
  \label{eq:markov-qp}
\end{align}
where we used
$(\gamma_\uparrow+\gamma_\downarrow)/(\gamma_\downarrow-\gamma_\uparrow)=2\bar n+1
=\coth(\oscR/2T)$.

Subtracting the first moments through
$\sigma_{kl}=\tfrac12\langle\{R_k,R_l\}\rangle-\langle R_k\rangle\langle R_l\rangle$,
the general (lab-frame) covariance matrix is
\begin{equation}
  \boldsymbol\sigma(t)
  =e^{-\gamma t}\,\mathcal R_{\oscR t}\,\boldsymbol\sigma(0)\,
   \mathcal R_{\oscR t}^{\!\top}
   +\big(1-e^{-\gamma t}\big)\,\boldsymbol\sigma_\infty,
  \qquad
  \boldsymbol\sigma_\infty
  =\frac{\coth(\oscR/2T)}{2}
   \begin{pmatrix}1/\oscR & 0\\[2pt] 0 & \oscR\end{pmatrix},
  \label{eq:markov-lab}
\end{equation}
the steady state $\boldsymbol\sigma_\infty$ being the Gibbs covariance
matrix~\eqref{eq:CM-thermal} at the renormalized frequency. For the symmetric
preparations $\boldsymbol\sigma(0)$ is isotropic and commutes with
$\mathcal R_{\oscR t}$, so the rotation cancels and Eq.~\eqref{eq:markov-lab} is
already affine in $u:=e^{-\gamma t}\in[0,1]$. For an anisotropic (squeezed)
preparation the rotation does not cancel and the lab-frame covariance oscillates at
$2\oscR$; passing to the interaction picture
$\tilde{\boldsymbol\sigma}(t):=\mathcal R_{\oscR t}^{\!\top}\,\boldsymbol\sigma(t)\,
\mathcal R_{\oscR t}$ and using
$\mathcal R_{\oscR t}^{\!\top}\,\boldsymbol\sigma_\infty\,\mathcal R_{\oscR t}
=\boldsymbol\sigma_\infty$ restores it,
\begin{equation}
  \boldsymbol\sigma(t)
  =e^{-\gamma t}\,\boldsymbol\sigma(0)
   +\big(1-e^{-\gamma t}\big)\,\boldsymbol\sigma_\infty,
  \label{eq:markov-affine}
\end{equation}
understood in the lab frame for symmetric preparations and in the interaction
picture otherwise. Equation~\eqref{eq:markov-affine} is the affine structure on
which the monotonicity analysis of Appendix~\ref{app:monotonicity} rests.

Finally, recognising that the optimal measurement on the thermal steady state
$\boldsymbol\sigma_\infty$ is an energy measurement, its QFI is the energy variance
divided by $T^4$,
\begin{equation}
  \QFI_\infty=\frac{\operatorname{Var}(H_S)}{T^4}
  =\frac{\oscR^2}{T^4}\,\bar n(\bar n+1),
  \label{eq:markov-ss-qfi}
\end{equation}
in agreement with Eq.~\eqref{eq:markov-qfi}.

Before closing this appendix, a remark on the unitary part of
Eq.~\eqref{eq:markov-me} is in order. The free evolution
$a\mapsto e^{-i\oscR t}a$ is the symplectic rotation $\mathcal R_{\oscR t}$ of
$(q,p)$. It leaves the isotropic steady state $\boldsymbol\sigma_\infty$ invariant
and acts trivially on the rotationally symmetric preparations---ground, vacuum, and
thermal states, for which $\langle a^2(0)\rangle=0$---used for the Markovian curves;
for a general (squeezed) preparation it multiplies the anomalous part of
$\boldsymbol\sigma(0)$ by $e^{-2i\oscR t}$, but the Gaussian QFI is invariant under
symplectic congruence, so it is unaffected. We may therefore work with the
rotation-invariant content, which is exactly the affine
form~\eqref{eq:markov-affine}.

\section{Closed Markovian Fisher informations and their monotonicity}
\label{app:monotonicity}

Under the renormalized GKLS dynamics of Eq.~\eqref{eq:GKLS} the temperature enters
the probe only through its covariance matrix. We show that the Markovian QFI rises
monotonically from $\QFI(0)=0$ to its stationary value $\QFI_\infty$ with no interior
optimum, so that the Markovian optimum is always the boundary $t^*\to\infty$. The
statement is global and elementary for the rotationally symmetric preparations used
in all the figures (Proposition~\ref{prop:iso}); for an arbitrary squeezed
preparation we prove the approach to $\QFI_\infty$ from below
(Lemma~\ref{lem:asymp-mono}) and global monotonicity (Proposition~\ref{prop:global}).
The same closed forms carry over to the practical homodyne and heterodyne
measurements (Propositions~\ref{prop:markov-homodyne} and~\ref{prop:markov-heterodyne}),
for which the absence of a finite-time optimum also holds. Throughout, it is the
single real relaxation rate $\gamma$ that forbids the oscillatory revivals which make
the exact dynamics non-monotonic.

We first reduce the QFI to a single elementary function of time. Since the
first-moment vector $\mathbf d_t=e^{-\gamma t/2}\mathcal R_{\oscR t}\mathbf d_0$ of
Eq.~\eqref{eq:markov-first} is $T$-independent, $\partial_T\mathbf d_t=\mathbf 0$ and
the Gaussian QFI~\eqref{eq:qfi-gaussian} reduces to its covariance part,
\begin{equation}
  \QFI(t)=2\Tr\!\big[(C^{(2)}_t\,\boldsymbol\sigma_t)^2
                  +\big(\tfrac12\,C^{(2)}_t\,\sympform\big)^2\big],
  \label{eq:qfi-cov}
\end{equation}
where $C^{(2)}_t$ solves the Sylvester equation~\eqref{eq:C2-eq}. The QFI is invariant
under symplectic congruence---both traces are unchanged when
$\boldsymbol\sigma\mapsto\mathcal R\boldsymbol\sigma\mathcal R^{\!\top}$,
$C^{(2)}\mapsto\mathcal R^{-\!\top}C^{(2)}\mathcal R^{-1}$---so we pass to the
interaction picture, in which Eq.~\eqref{eq:markov-affine} renders the covariance
affine in $u:=e^{-\gamma t}\in[0,1]$,
\begin{equation}
  \boldsymbol\sigma_t=\boldsymbol\sigma_\infty+u\,\boldsymbol\delta,
  \qquad
  \boldsymbol\delta=\boldsymbol\sigma(0)-\boldsymbol\sigma_\infty,
  \qquad
  \partial_T\boldsymbol\sigma_t=(1-u)\,\partial_T\boldsymbol\sigma_\infty,
  \label{eq:cm-affine}
\end{equation}
using $\partial_T\boldsymbol\sigma(0)=\mathbf 0$. A fixed symplectic rescaling
diagonalises the steady state to $\boldsymbol\sigma_\infty=(\bar n+\tfrac12)\,\mathds{1}$,
where
\begin{equation}
  \bar n=\frac{1}{e^{\oscR/T}-1}
  \qquad\text{and}\qquad
  \dot{\bar n}:=\partial_T\bar n=\frac{\oscR}{T^2}\,\bar n(\bar n+1)>0
  \label{eq:nbar-recall}
\end{equation}
are the equilibrium occupation of the renormalized mode and its temperature
derivative, so that $\partial_T\boldsymbol\sigma_\infty=\dot{\bar n}\,\mathds{1}$,
while a (possibly squeezed) initial state gives
$\boldsymbol\sigma_t=\operatorname{diag}(a_t,b_t)$ with $a_t,b_t$ affine in $u$.

\begin{lemma}[Closed-form Markovian QFI]\label{lem:closed}
For a single-mode Gaussian preparation evolving under Eq.~\eqref{eq:GKLS}, in the
rescaled interaction picture above with $\boldsymbol\sigma_t=\operatorname{diag}(a_t,b_t)$,
the quantum Fisher information is
\begin{equation}
  \QFI(t)=\frac{4\,\dot{\bar n}^{\,2}\,(1-u)^2\,
                \bigl[\,2(a_t^2+b_t^2)+1\,\bigr]}
               {16\,(a_t b_t)^2-1}\,,
  \qquad u=e^{-\gamma t}.
  \label{eq:qfi-closed}
\end{equation}
\end{lemma}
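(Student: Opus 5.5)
The plan is to solve the Sylvester equation~\eqref{eq:C2-eq} explicitly in the rescaled interaction picture and then collapse the QFI~\eqref{eq:qfi-cov} to a single trace.

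\textbf{Step 1: reduce to a diagonal frame.} First I justify that $\boldsymbol\sigma_t$ can be taken diagonal at all times. The rescaling $q\mapsto\sqrt{\oscR}\,q$, $p\mapsto p/\sqrt{\oscR}$ is symplectic and maps $\boldsymbol\sigma_\infty$ to $(\bar n+\tfrac12)\mathds{1}$. This form is invariant under any further phase-space rotation. A fixed rotation diagonalises $\boldsymbol\sigma(0)$ (i.e.\ absorbs the squeezing angle $\varphi$), so by~\eqref{eq:cm-affine} it diagonalises $\boldsymbol\sigma_t$ for every $u$.

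Both operations are $T$-independent symplectic congruences, so they leave the QFI unchanged, as noted before the lemma. We then have $\boldsymbol\sigma_t=\operatorname{diag}(a_t,b_t)$ and $\partial_T\boldsymbol\sigma_t=\delta\,\mathds{1}$ with $\delta:=(1-u)\dot{\bar n}$.

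\textbf{Step 2: solve for $C^{(2)}$.} I make the diagonal ansatz $C^{(2)}_t=\operatorname{diag}(c_1,c_2)$. A direct computation gives $\sympform\,\operatorname{diag}(c_1,c_2)\,\sympform=-\operatorname{diag}(c_2,c_1)$. Equation~\eqref{eq:C2-eq} then becomes the $2\times2$ linear system
\[
2a_t^2c_1-\tfrac12 c_2=\delta,\qquad -\tfrac12 c_1+2b_t^2c_2=\delta .
\]
Its determinant is $4a_t^2b_t^2-\tfrac14$. Its solution is
\[
c_1=\frac{\delta\,(2b_t^2+\tfrac12)}{4a_t^2b_t^2-\tfrac14},\qquad
c_2=\frac{\delta\,(2a_t^2+\tfrac12)}{4a_t^2b_t^2-\tfrac14}.
\]
The Sylvester solution is unique whenever the state is not pure, so the ansatz loses no generality.

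\textbf{Step 3: evaluate the QFI.} Rather than expanding~\eqref{eq:qfi-cov} term by term, I use the identity $\QFI=\Tr[C^{(2)}\partial_T\boldsymbol\sigma]$. To obtain it, multiply~\eqref{eq:C2-eq} by $C^{(2)}$, take the trace, and use cyclicity; this gives $\Tr[C\partial_T\boldsymbol\sigma]=2\Tr[(C\boldsymbol\sigma)^2]+\tfrac12\Tr[(C\sympform)^2]$, which is exactly~\eqref{eq:qfi-cov}.

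As a cross-check, one can compute directly: $(C\sympform)^2=-c_1c_2\mathds{1}$, so $\QFI=2(c_1^2a_t^2+c_2^2b_t^2)-c_1c_2$. Either route gives
\[
\QFI=\delta(c_1+c_2)=\frac{\delta^2\,[2(a_t^2+b_t^2)+1]}{4a_t^2b_t^2-\tfrac14}.
\]
Multiplying numerator and denominator by $4$ yields~\eqref{eq:qfi-closed}.

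\textbf{Main obstacle: degenerate endpoints.} The delicate point is the denominator. For $0<u<1$ I expect $a_tb_t>\tfrac14$ strictly, because a nontrivial convex combination of a pure state with the mixed $\boldsymbol\sigma_\infty$ is mixed, so $\det\boldsymbol\sigma_t>\tfrac14$. At $u=1$ a pure preparation gives $16(a_tb_t)^2-1=0$, but there $\delta=0$ as well.

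In that limit I would state $\QFI(0)=0$ separately, since $\partial_T\rho_S(0)=0$. I would then check that the closed form tends to zero as $u\to1$: the numerator vanishes like $(1-u)^2$, while $16(a_tb_t)^2-1$ vanishes only like $(1-u)$ because it has nonzero slope in $u$. This is the usual Gaussian-QFI subtlety at pure states, and it is the only step needing care.
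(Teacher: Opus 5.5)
Your proof is correct and follows the paper's route. You use the same diagonal frame and the same $2\times 2$ system for $(C_{11},C_{22})$ solved by Cramer's rule; your uniqueness remark for non-pure states does the job of the paper's $(2ab+\tfrac12)C_{12}=0$ step. The only real difference is the final step: your identity $\QFI=\Tr[C^{(2)}\partial_T\boldsymbol\sigma]=\delta(c_1+c_2)$ replaces the paper's expansion and factorisation of the bracket $B$, and it shows why $B=\Delta\,[2(a^2+b^2)+1]$. You also treat the pure-state endpoint $u=1$ explicitly, which the paper only addresses later, in Proposition~\ref{prop:global}.
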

\begin{proof}
Write $\boldsymbol\sigma=\operatorname{diag}(a,b)$, $C^{(2)}=\begin{psmallmatrix}C_{11}&C_{12}\\C_{12}&C_{22}\end{psmallmatrix}$,
$\sympform=\begin{psmallmatrix}0&1\\-1&0\end{psmallmatrix}$, and $\partial_T\boldsymbol\sigma=\dot{\bar n}(1-u)\,\mathds{1}$.
A short matrix multiplication gives
$2\boldsymbol\sigma C^{(2)}\boldsymbol\sigma=\begin{psmallmatrix}2a^2C_{11}&2ab\,C_{12}\\2ab\,C_{12}&2b^2C_{22}\end{psmallmatrix}$
and
$\tfrac12\sympform C^{(2)}\sympform=\tfrac12\begin{psmallmatrix}-C_{22}&C_{12}\\C_{12}&-C_{11}\end{psmallmatrix}$,
so the Sylvester equation~\eqref{eq:C2-eq} reads, entry by entry,
\begin{align}
  \bigl(2ab+\tfrac12\bigr)C_{12}&=0,\nonumber\\
  2a^2C_{11}-\tfrac12 C_{22}&=\dot{\bar n}(1-u),\qquad
  2b^2C_{22}-\tfrac12 C_{11}=\dot{\bar n}(1-u).
  \label{eq:sylv-components}
\end{align}
Since $2ab+\tfrac12>0$ the first line forces $C_{12}=0$. The remaining two equations are
a linear system $\begin{psmallmatrix}2a^2&-\tfrac12\\-\tfrac12&2b^2\end{psmallmatrix}\!\begin{psmallmatrix}C_{11}\\C_{22}\end{psmallmatrix}=\dot{\bar n}(1-u)\begin{psmallmatrix}1\\1\end{psmallmatrix}$
with determinant $\Delta=4a^2b^2-\tfrac14$; by Cramer's rule
\begin{equation}
  C_{11}=\frac{\dot{\bar n}(1-u)\,(2b^2+\tfrac12)}{\Delta},\qquad
  C_{22}=\frac{\dot{\bar n}(1-u)\,(2a^2+\tfrac12)}{\Delta}.
  \label{eq:C-solution}
\end{equation}
With $C_{12}=0$ the two traces in Eq.~\eqref{eq:qfi-cov} are diagonal: $C^{(2)}\boldsymbol\sigma=\operatorname{diag}(aC_{11},bC_{22})$ gives
$\Tr[(C^{(2)}\boldsymbol\sigma)^2]=a^2C_{11}^2+b^2C_{22}^2$, while
$\tfrac12 C^{(2)}\sympform=\tfrac12\begin{psmallmatrix}0&C_{11}\\-C_{22}&0\end{psmallmatrix}$ squares to $-\tfrac14 C_{11}C_{22}\,\mathds{1}$, so
$\Tr[(\tfrac12 C^{(2)}\sympform)^2]=-\tfrac12 C_{11}C_{22}$. Hence
\begin{equation}
  \QFI=2a^2C_{11}^2+2b^2C_{22}^2-C_{11}C_{22}.
  \label{eq:qfi-traces}
\end{equation}
Inserting Eq.~\eqref{eq:C-solution} and pulling out the common factor
$\dot{\bar n}^2(1-u)^2/\Delta^2$, the bracket
$B:=2a^2(2b^2+\tfrac12)^2+2b^2(2a^2+\tfrac12)^2-(2b^2+\tfrac12)(2a^2+\tfrac12)$
expands and factorises as $B=(16a^2b^2-1)\,[\,2(a^2+b^2)+1\,]/4$. Using
$\Delta^2=(16a^2b^2-1)^2/16$, one factor of $16a^2b^2-1$ cancels and
Eq.~\eqref{eq:qfi-traces} collapses to Eq.~\eqref{eq:qfi-closed}.
\end{proof}

Because $a_t,b_t$ are affine in $u$, Eq.~\eqref{eq:qfi-closed} is an explicit
elementary function of time, free of any matrix inversion; this is what makes it the
numerically reliable way to evaluate the Markovian QFI at ultralow temperatures
(cf.\ the note following Fig.~\ref{fig:Mark-diff-temp}).

\begin{proposition}[Symmetric preparations]\label{prop:iso}
For a ground, vacuum, or thermal initial state the covariance is isotropic,
$a_t=b_t=\tfrac12+(1-u)\bar n$, and Eq.~\eqref{eq:qfi-closed} reduces to
\begin{equation}
  \QFI(t)=\frac{(1-u)\,\dot{\bar n}^{\,2}}
               {\bar n\,\bigl[\,1+(1-u)\bar n\,\bigr]}\,,
  \qquad u=e^{-\gamma t}.
  \label{eq:qfi-iso}
\end{equation}
Then $\QFI(t)$ increases strictly and monotonically from $\QFI(0)=0$ to
$\QFI_\infty=\dot{\bar n}^{\,2}/[\bar n(\bar n+1)]$ as $t$ runs from $0$ to $\infty$,
with no interior optimum; the optimal interrogation time is $t^*\to\infty$.
\end{proposition}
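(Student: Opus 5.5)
The plan is to substitute the isotropic covariance directly into the closed form of Lemma~\ref{lem:closed}, simplify the resulting rational function, and then read off monotonicity from a one-variable function of $s:=1-u=1-e^{-\gamma t}$.

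\textbf{Isotropic covariance.} I first justify $a_t=b_t$. In the rescaled interaction picture the vacuum (ground state) at the renormalized frequency has $\boldsymbol\sigma(0)=\tfrac12\mathds{1}$, and the steady state is $\boldsymbol\sigma_\infty=(\bar n+\tfrac12)\mathds{1}$. Equation~\eqref{eq:cm-affine} then gives $a_t=b_t=\tfrac12+u\,(\tfrac12-\bar n-\tfrac12)+\bar n=\tfrac12+(1-u)\bar n$.

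For a thermal preparation the same isotropy holds. Its initial occupation $n_0$ would add a term $u\,n_0$, so the stated formula corresponds to $n_0=0$. For $n_0>0$ I would carry $x_t=\tfrac12+u n_0+(1-u)\bar n$ through the same algebra and remark that the monotonicity argument below needs only that $x_t$ is affine in $u$ with $x_t>\tfrac12$ for $t>0$.

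\textbf{Simplification.} Setting $a=b=x$ in Eq.~\eqref{eq:qfi-closed}, the numerator bracket is $4x^2+1$ and the denominator is $16x^4-1=(4x^2-1)(4x^2+1)$. The factor $4x^2+1>0$ cancels, leaving
\begin{equation*}
\QFI=\frac{4\dot{\bar n}^{2}(1-u)^2}{4x^2-1}.
\end{equation*}
With $2x-1=2(1-u)\bar n$ and $2x+1=2[1+(1-u)\bar n]$, one factor of $(1-u)$ cancels and Eq.~\eqref{eq:qfi-iso} follows.

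The only delicate point is $t=0$. There $u=1$, $x=\tfrac12$, and Eq.~\eqref{eq:qfi-closed} is formally $0/0$, reflecting that $\boldsymbol\sigma(0)$ is pure. I would state that Eq.~\eqref{eq:qfi-iso} holds for $t>0$ and defines the continuous extension at $t=0$, with value $0$. This agrees with $\partial_T\rho_S(0)=0$, so $\QFI(0)=0$ directly.

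\textbf{Monotonicity.} Write $\QFI=(\dot{\bar n}^2/\bar n)\,f(s)$ with $f(s)=s/(1+s\bar n)$, where $s=1-e^{-\gamma t}$ increases strictly from $0$ to $1$. Then
\begin{equation*}
f'(s)=\frac{1}{(1+s\bar n)^2}>0,
\end{equation*}
and $\dot{\bar n}^2/\bar n>0$ because $\bar n>0$ at $T>0$. Hence $\QFI(t)$ is strictly increasing in $t$, with $\QFI(0)=0$ and limit $\dot{\bar n}^2/[\bar n(\bar n+1)]$ as $t\to\infty$.

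Using Eq.~\eqref{eq:nbar-recall}, this limit equals $\oscR^2\bar n(\bar n+1)/T^4$, consistent with Eq.~\eqref{eq:markov-ss-qfi}. A strictly increasing function has no interior maximum and its supremum is not attained at any finite time, so $t^*\to\infty$.

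I expect no real obstacle; the step needing most care is the removable singularity at $t=0$.
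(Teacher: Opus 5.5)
Your proof is correct and follows the paper's argument step for step. Both substitute $a_t=b_t$ into Eq.~\eqref{eq:qfi-closed}, cancel the common factor $4x^2+1$, and factor $4x^2-1=4w\bar n(1+w\bar n)$ with $w=1-e^{-\gamma t}$; both then note that $w/(1+w\bar n)$ has derivative $1/(1+w\bar n)^2>0$.

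Your caveat about thermal preparations is also right: the paper's proof, like yours, only treats $\boldsymbol\sigma(0)=\tfrac12\mathds{1}$, so the stated $a_t=\tfrac12+(1-u)\bar n$ holds only for $n_0=0$. For $n_0>0$, however, the function is no longer $s/(1+s\bar n)$, so your claim that affinity and $x_t>\tfrac12$ suffice is not covered by your computation and needs its own short check. That check does go through, for instance by bounding each term of the logarithmic derivative in $w$.
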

\begin{proof}
For the vacuum/ground state $\boldsymbol\sigma(0)=\tfrac12\mathds{1}$, so
Eq.~\eqref{eq:cm-affine} gives $a_t=b_t=(1-u)(\bar n+\tfrac12)+u\cdot\tfrac12
=\tfrac12+(1-u)\bar n$. Setting $a=b$ in Eq.~\eqref{eq:qfi-closed}, the two invariants
simplify to $2(a^2+b^2)+1=4a^2+1$ and $16(ab)^2-1=16a^4-1=(4a^2-1)(4a^2+1)$, so the
common factor $4a^2+1$ cancels and
\begin{equation}
  \QFI(t)=\frac{4\,\dot{\bar n}^2(1-u)^2}{4a^2-1}\,.
  \label{eq:qfi-iso-step}
\end{equation}
Writing $w:=1-u$, we have $a=\tfrac12+w\bar n$, hence
$4a^2-1=(1+2w\bar n)^2-1=4w\bar n(1+w\bar n)$. Substituting into
Eq.~\eqref{eq:qfi-iso-step} and cancelling one power of $w$ yields
Eq.~\eqref{eq:qfi-iso}, $\QFI=(\dot{\bar n}^2/\bar n)\,w/(1+w\bar n)$. Differentiating,
\begin{equation}
  \frac{d\QFI}{dw}=\frac{\dot{\bar n}^2}{\bar n}\,
  \frac{(1+w\bar n)-w\bar n}{(1+w\bar n)^2}
  =\frac{\dot{\bar n}^2}{\bar n\,(1+w\bar n)^2}>0.
  \label{eq:qfi-iso-deriv}
\end{equation}
Finally $w=1-e^{-\gamma t}$ increases strictly with $t$ (from $w=0$ at $t=0$ to $w=1$
as $t\to\infty)$, so $\QFI(t)$ does too; the endpoint values are $\QFI=0$ at $w=0$ and
$\QFI_\infty=\dot{\bar n}^2/[\bar n(\bar n+1)]$ at $w=1$. There is no interior
stationary point, so $t^*\to\infty$.
\end{proof}

This covers every figure in the paper. For a squeezed preparation the steady state
remains isotropic but $a_t\neq b_t$. Removing the squeezing angle by a
temperature-independent rotation, we may take the initial covariance
$\boldsymbol\sigma(0)=\tfrac12\operatorname{diag}(e^{-2r},e^{2r})$; written in the
relaxation variable $w:=1-u=1-e^{-\gamma t}$, Eq.~\eqref{eq:cm-affine} then gives the
distinct affine entries
\begin{equation}
  a(w)=\tfrac12(1-w)e^{-2r}+w\nu_0,\qquad
  b(w)=\tfrac12(1-w)e^{2r}+w\nu_0,\qquad
  \nu_0=\bar n+\tfrac12.
  \label{eq:squeezed-ab}
\end{equation}
We first record the approach to the steady state (Lemma~\ref{lem:asymp-mono}) and then
establish global monotonicity (Proposition~\ref{prop:global}).

\begin{lemma}[Approach to the steady state]\label{lem:asymp-mono}
For any single-mode Gaussian preparation at $T>0$, the closed form~\eqref{eq:qfi-closed}
expands about $u=0$ as
\begin{equation}
  \QFI(t)=\QFI_\infty-\mathcal A\,e^{-\gamma t}+O\!\big(e^{-2\gamma t}\big),
  \qquad
  \mathcal A=-\,\frac{d\QFI}{du}\Big|_{u=0},
  \label{eq:f-expansion}
\end{equation}
and, for the squeezed-vacuum preparation~\eqref{eq:squeezed-ab},
\begin{equation}
  \mathcal A
  =\frac{\dot{\bar n}^{\,2}\,\bigl(\bar n\cosh 2r+\sinh^2 r\bigr)}
        {\bar n^{2}(\bar n+1)^{2}}\;>\;0
  \qquad\text{for all } r\ge 0,\ \bar n>0.
  \label{eq:Avalue}
\end{equation}
Hence $\QFI(t)$ approaches $\QFI_\infty$ from below, never from above, and the optimum
is $t^*\to\infty$.
\end{lemma}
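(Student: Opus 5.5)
The plan is to treat the closed form~\eqref{eq:qfi-closed} of Lemma~\ref{lem:closed} as a function of the single variable $u=e^{-\gamma t}$ near $u=0$, and Taylor expand it there. Since $\partial_T\boldsymbol\sigma(0)=\mathbf 0$ and the $T$-dependence sits only in $\dot{\bar n}$ and $\nu_0=\bar n+\tfrac12$, everything reduces to one-variable calculus.

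\emph{Step 1: analyticity at $u=0$.} The numerator $4\dot{\bar n}^2(1-u)^2[2(a_t^2+b_t^2)+1]$ is a polynomial in $u$, because $a_t,b_t$ are affine. The denominator at $u=0$ equals
\[
16\nu_0^4-1=(4\nu_0^2-1)(4\nu_0^2+1),
\qquad
4\nu_0^2-1=4\bar n(\bar n+1),
\]
which is strictly positive for $T>0$. Hence the expression is a rational function of $u$, regular at $u=0$. Its first-order Taylor expansion with quadratic remainder gives Eq.~\eqref{eq:f-expansion}, with $\mathcal A=-\,d\QFI/du|_{u=0}$. The value at $u=0$ reproduces $\QFI_\infty=\dot{\bar n}^2/[\bar n(\bar n+1)]$, consistent with Proposition~\ref{prop:iso}.

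\emph{Step 2: the derivative for the squeezed vacuum.} Here it is convenient to use $w=1-u$, so that $\mathcal A=d\QFI/dw|_{w=1}$. I take $a(w),b(w)$ from Eq.~\eqref{eq:squeezed-ab}. At $w=1$ both equal $\nu_0$, and
\[
a'+b'=2\nu_0-\cosh 2r .
\]
Write $N=2(a^2+b^2)+1$ and $D=16a^2b^2-1$. At $w=1$ one has $N/D=1/(4\nu_0^2-1)$. The derivatives there are
\[
N'=4\nu_0(a'+b'),\qquad D'=32\nu_0^3(a'+b').
\]
The key simplification is the combination entering the quotient rule:
\[
4\nu_0D-32\nu_0^3N=-4\nu_0(4\nu_0^2+1)^2 .
\]
With it, the quotient-rule term collapses to $-4\nu_0(a'+b')/(4\nu_0^2-1)^2$. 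Adding the contribution $2N/D$ coming from $d(w^2)/dw$ and substituting $4\nu_0^2-1=4\bar n(\bar n+1)$ gives
\[
\mathcal A=\frac{\dot{\bar n}^2}{\bar n^2(\bar n+1)^2}\Bigl[2\bar n(\bar n+1)-\nu_0(2\nu_0-\cosh 2r)\Bigr].
\]
Using $2\nu_0^2=2\bar n^2+2\bar n+\tfrac12$, the bracket equals $\nu_0\cosh 2r-\tfrac12=\bar n\cosh 2r+\sinh^2 r$. This is Eq.~\eqref{eq:Avalue}.

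\emph{Step 3: positivity and conclusion.} Positivity is immediate: $\bar n>0$, $\cosh 2r\ge1$ and $\sinh^2r\ge0$. Therefore $\QFI(t)<\QFI_\infty$ for all sufficiently large $t$, so the approach is from below. A general Gaussian preparation reduces to this case by the temperature-independent symplectic rotation and rescaling already used before the lemma, together with the invariance of the QFI under symplectic congruence; displacements are irrelevant because $\partial_T\mathbf d_t=\mathbf 0$. The only real obstacle is the algebra of Step 2, specifically spotting the factorisation of $4\nu_0D-32\nu_0^3N$. I would organise the computation around the symmetric point $a=b=\nu_0$, where it becomes transparent.
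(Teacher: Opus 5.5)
Your proposal is correct and follows the paper's proof essentially line by line. It uses the same Taylor expansion at the symmetric point $a=b=\nu_0$, the same quotient rule with $N'=4\nu_0(a'+b')$ and $D'=32\nu_0^3(a'+b')$, and the same simplification via $4\nu_0^2-1=4\bar n(\bar n+1)$, differing only in that you differentiate in $w=1-u$. Two minor remarks: the rotation-and-rescaling reduction in Step~3 brings any \emph{pure} Gaussian preparation to the squeezed vacuum but not a mixed one, although your computation only uses $\operatorname{tr}\boldsymbol\sigma(0)$ and gives the bracket $\nu_0\operatorname{tr}\boldsymbol\sigma(0)-\tfrac12>0$ in general. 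Also, exactly as in the paper, the local expansion alone does not yield $t^*\to\infty$; that claim rests on Proposition~\ref{prop:global}.
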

\begin{proof}
At $T>0$ the steady state is mixed,
$\det\boldsymbol\sigma_\infty=(\bar n+\tfrac12)^2>\tfrac14$, so the denominator
$16(a_tb_t)^2-1$ of Eq.~\eqref{eq:qfi-closed} is positive at $u=0$ and the
expansion~\eqref{eq:f-expansion} is regular. Write
$\QFI=4\dot{\bar n}^2(1-u)^2 f(u)$ with $f=N/D$, $N=2(a_t^2+b_t^2)+1$,
$D=16(a_tb_t)^2-1$; then
$\mathcal A=-\,d\QFI/du|_{0}=-4\dot{\bar n}^2\bigl[-2f(0)+f'(0)\bigr]$.
In the variable $u=1-w$ the entries of Eq.~\eqref{eq:squeezed-ab} read
$a_t=(1-u)\nu_0+\tfrac{u}{2}e^{-2r}$ and $b_t=(1-u)\nu_0+\tfrac{u}{2}e^{2r}$, so at
$u=0$ one has $a_0=b_0=\nu_0$ and the derivatives
$a_0'+b_0'=\tfrac12(e^{2r}+e^{-2r})-2\nu_0=\cosh 2r-(2\bar n+1)=:c$. Hence
$N(0)=4\nu_0^2+1$, $D(0)=(4\nu_0^2-1)(4\nu_0^2+1)$, giving $f(0)=1/(4\nu_0^2-1)$, and
$N'(0)=4\nu_0(a_0'+b_0')=4\nu_0 c$, $D'(0)=32\nu_0^3 c$, so
\begin{equation}
  f'(0)=\frac{N'(0)D(0)-N(0)D'(0)}{D(0)^2}
       =-\,\frac{4\nu_0 c}{(4\nu_0^2-1)^2}.
  \label{eq:fprime}
\end{equation}
Assembling,
$\mathcal A=4\dot{\bar n}^2\bigl[\,2(4\nu_0^2-1)+4\nu_0 c\,\bigr]/(4\nu_0^2-1)^2$.
Using $4\nu_0^2-1=4\bar n(\bar n+1)$ and $4\nu_0 c=(4\bar n+2)(\cosh2r-2\bar n-1)$,
the square bracket equals $2\bigl[(2\bar n+1)\cosh2r-1\bigr]$, whence
$\mathcal A=\dot{\bar n}^2[(2\bar n+1)\cosh2r-1]/[2\bar n^2(\bar n+1)^2]$. Finally
$[(2\bar n+1)\cosh2r-1]/2=\bar n\cosh2r+\tfrac12(\cosh2r-1)=\bar n\cosh2r+\sinh^2 r$,
which gives Eq.~\eqref{eq:Avalue}. Both terms $\bar n\cosh2r$ and $\sinh^2r$ are
non-negative (and $\bar n\cosh2r>0$ for $\bar n>0)$, so $\mathcal A>0$ strictly.
\end{proof}

\begin{proposition}[Global monotonicity]\label{prop:global}
For every temperature $T>0$ and every squeezing $r\ge0$, the Markovian $\QFI(t)$ of the
squeezed preparation~\eqref{eq:squeezed-ab} is strictly increasing in $t$. Consequently
it has no interior optimum, and the optimal interrogation time is $t^*\to\infty$.
\end{proposition}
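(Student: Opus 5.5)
Because $w=1-e^{-\gamma t}$ increases strictly with $t$, it suffices to show $d\QFI/dw>0$ on $w\in(0,1]$, with $\QFI$ given by the closed form of Lemma~\ref{lem:closed} and $a(w),b(w)$ by Eq.~\eqref{eq:squeezed-ab}. At $w=0$ we have $\QFI=0$, and at $w=1$ the value is $\QFI_\infty$. Near $w=1$ the statement is already supplied by Lemma~\ref{lem:asymp-mono}, so the task is to extend the sign of the derivative to the whole interval.

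The first step is to rewrite the ingredients of Eq.~\eqref{eq:qfi-closed} in terms of $c:=\cosh 2r$ and $\nu_0=\bar n+\tfrac12$. Write $s:=a+b$ and $p:=ab$. Then
\[
s=(1-w)c+2w\nu_0,\qquad 4p=(1-w)^2+4w(1-w)\nu_0 c+4w^2\nu_0^2,
\]
and the numerator is $N=2(a^2+b^2)+1=2s^2-4p+1$. The key simplification is that $4p-1$ vanishes at $w=0$ because the initial state is pure, so it carries a factor of $w$:
\[
4p-1=w\,h(w),\qquad h(w)=-2+w+4(1-w)\nu_0c+4w\nu_0^2 .
\]
The function $h$ is affine in $w$. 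Its endpoint values are $h(0)=4\nu_0c-2\ge 4\nu_0-2=4\bar n>0$ and $h(1)=4\nu_0^2-1>0$, so $h>0$ on $[0,1]$. This cancels one power of $w$ in $\QFI$ and gives
\[
\QFI=4\dot{\bar n}^2\,\frac{w}{h(w)}\cdot\frac{N(w)}{4p(w)+1}.
\]

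Next I would take the logarithmic derivative. The factor $w/h$ is increasing, since its derivative is $h(0)/h^2>0$. The factor $N/(4p+1)$ need not be monotone by itself, because $p$ grows while $N$ may shrink as the squeezing decays. I would therefore not treat the two factors separately. Instead I would clear denominators and aim to show that the polynomial
\[
P(w):=h(0)\,N\,(4p+1)+w\,h\,\bigl[N'(4p+1)-4Np'\bigr]
\]
is strictly positive on $[0,1]$. Here $P$ is, up to positive factors, the numerator of $d\QFI/dw$.

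The last step is to expand $P$ in the variables $\bar n\ge0$, $x:=\sinh^2 r\ge0$ (so $c=1+2x$), and $w\in[0,1]$. I would write it in the Bernstein basis $w^k(1-w)^{m-k}$ and check that every coefficient is a polynomial in $(\bar n,x)$ with nonnegative coefficients, with at least one strictly positive. The $k=0$ coefficient, for instance, is $h(0)N(0)(4p(0)+1)=h(0)(2c^2)\cdot 2>0$.

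I expect this sign check to be the main obstacle. The only potentially negative contributions come from $-4Np'$ and from the term $(1-w)c$ in $s$ when $c>2\nu_0$, that is, strong squeezing with small $\bar n$. For those terms I would try to dominate each by matching positive terms, using $c\ge1$, $c^2-1=4x(1+x)$, and $h\ge 4\bar n w+(1-w)(4\nu_0 c-2)$.

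If a direct coefficient check fails, the fallback is a second route. I would note that in the variable $y=w/(1-w)$ the expression $4p+1$ and $N$ become quadratic forms, and show that $\QFI$ is a ratio whose derivative numerator has all-positive coefficients in $y$. As a consistency check, $r=0$ must reproduce the derivative of Proposition~\ref{prop:iso}.
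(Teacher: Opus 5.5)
Your plan is correct and is essentially the paper's proof. Your numerator $P(w)=h(0)N(4p+1)+w\,h\bigl[N'(4p+1)-4Np'\bigr]$ is exactly twice the paper's quartic, and your factorisation $4ab-1=w\,h(w)$ is the paper's identity for $ab-\tfrac14$. The degree-4 Bernstein check you expect to be the main obstacle does go through: the paper obtains, for instance, $B_0=4\cosh^2 2r\,A$, $B_1=2A(A+2)$ and $B_4=4G^2A$, with $A=(2\bar n+1)\cosh 2r-1>0$ and $G=2\bar n^2+2\bar n+1$. All the coefficients are polynomials in $\bar n$ and $\cosh 2r-1=2\sinh^2 r$ with nonnegative coefficients, so your fallback route is not needed.
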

\begin{proof}
Since $w=1-e^{-\gamma t}$ increases strictly with $t$ from $0$ to $1$, it suffices to
prove $d\QFI/dw>0$ on $w\in(0,1]$. Abbreviate $n:=\bar n>0$, $C:=\cosh 2r$,
$h:=C-1\ge0$, and $\nu_0=n+\tfrac12$. From Eq.~\eqref{eq:squeezed-ab},
\begin{equation}
  a(w)\,b(w)-\tfrac14=w(1-w)\bigl(\nu_0 C-\tfrac12\bigr)+w^2 n(n+1),
  \label{eq:ab-positive}
\end{equation}
whose right-hand side is strictly positive on $w\in(0,1]$ (both $\nu_0 C-\tfrac12$ and
$n(n+1)$ are positive); hence $D(w):=16\,a(w)^2 b(w)^2-1>0$ there, and the closed
form~\eqref{eq:qfi-closed} is regular---the removable singularity at $w=0$ (pure
initial state) being fixed by $\QFI(0)=0$.

Differentiating Eq.~\eqref{eq:qfi-closed}, written with $(1-u)^2=w^2$, gives
\begin{equation}
  \frac{d\QFI}{dw}=\frac{8\,\dot{\bar n}^{\,2}\,w^2}{D(w)^2}\,P(w),
  \label{eq:qfi-derivative-P}
\end{equation}
where $P(w)$ is a quartic polynomial in $w$. Its sign is most transparent in the
Bernstein basis on $[0,1]$,
\begin{equation}
  P(w)=\sum_{k=0}^{4}\binom{4}{k}B_k\,w^k(1-w)^{4-k}.
  \label{eq:P-bernstein}
\end{equation}
Setting $A:=h(2n+1)+2n$ and $G:=2n^2+2n+1$, expansion and conversion from the monomial
to the Bernstein basis give
\begin{align}
  B_0&=4(h+1)^2A,\nonumber\\
  B_1&=2A(A+2),\nonumber\\
  B_2&=\tfrac43\bigl[\,G h^2+h(8n^3+16n^2+10n+3)+2n(4n^2+6n+3)\,\bigr],\nonumber\\
  B_3&=2G\bigl[\,G h^2+2G h+4n(n+1)\,\bigr],\nonumber\\
  B_4&=4G^2A.
  \label{eq:B-coefficients}
\end{align}
For $n>0$ and $h\ge0$ every quantity on the right is non-negative: $A>0$, $G>0$,
$h+1=\cosh 2r\ge1$, and the bracketed polynomials in $n$ carry only positive
coefficients; hence each $B_k>0$. The Bernstein functions
$\binom{4}{k}w^k(1-w)^{4-k}$ are non-negative on $[0,1]$, so $P(w)>0$ for $0\le w\le1$.
By Eq.~\eqref{eq:qfi-derivative-P}, $d\QFI/dw>0$ on $(0,1]$; therefore $\QFI(t)$ is
strictly increasing in $t$, with no interior optimum, and $t^*\to\infty$.
\end{proof}

Together, Propositions~\ref{prop:iso} and~\ref{prop:global} and
Lemma~\ref{lem:asymp-mono} establish the boundary character $t^*\to\infty$ of the
Markovian optimum used throughout Sec.~\ref{sec:results}.

\begin{remark}[Role of the initial state and contrast with the exact dynamics]
The sign of $\mathcal A$ in Eq.~\eqref{eq:Avalue} is the same for a low-information
start and for a strongly squeezed one: in both cases the Markovian QFI rises from zero
to the \emph{same} steady-state value $\QFI_\infty$, which depends only on the
equilibrium Gibbs state and not on the initial squeezing. Squeezing accelerates the
early build-up of the QFI---through the suppressed $\sigma_{11}(0)$ discussed in
Sec.~\ref{subsec:transient}---but confers no Markovian transient advantage, since
there is no interior time at which to stop. The distinction from the exact dynamics is
therefore not the mere existence of an interior optimum but the oscillatory revivals:
the exact QFI is non-monotonic, with a sequence of interior maxima at finite $t^*>0$
set by the period $2\pi/\effW$, which a single-rate Markovian semigroup cannot
produce. In this sense the present statement is sharper than the bound of
Ref.~\cite{SekatskiPerarnau2022}, which constrains $\QFI/t$ for general Markovian
dynamics but leaves the monotonicity of $\QFI(t)$ itself open.
\end{remark}

The same affine structure also gives closed expressions for the classical Fisher
information of the Gaussian measurements used in the main text. These are elementary
consequences of the Fisher information of Gaussian probability distributions, or
equivalently of the general Gaussian-measurement formula~\eqref{eq:cfi-gaussian} of
Sec.~\ref{subsec:gaussian}~\cite{Cenni2022}. We record them because they make explicit
that the absence of a finite-time optimum is not restricted to the optimal SLD
measurement: it also holds for the Markovian homodyne and heterodyne measurements,
provided the homodyne phase is chosen in the co-rotating principal-quadrature frame.

\begin{proposition}[Markovian homodyne Fisher information]
\label{prop:markov-homodyne}
In the rescaled interaction picture used throughout this appendix, with
$\boldsymbol\sigma_t=\operatorname{diag}\!\bigl(a(w),b(w)\bigr)$,
$\partial_T\boldsymbol\sigma_t=w\,\dot{\bar n}\,\mathds{1}$, and $w=1-e^{-\gamma t}$,
a homodyne measurement of the rotated quadrature
$q_\theta=q\cos\theta+p\sin\theta$ has outcome variance
$V_\theta(w)=a(w)\cos^2\theta+b(w)\sin^2\theta$ and classical Fisher information
\begin{equation}
  \CFI_{\mathrm{hom}}^{(\theta)}(t)
  =\frac{w^2\,\dot{\bar n}^{\,2}}{2\,V_\theta(w)^2}\,.
  \label{eq:homodyne-cfi-theta}
\end{equation}
Optimising over the local-oscillator phase gives
\begin{equation}
  \CFI_{\mathrm{hom}}^{\mathrm{opt}}(t)
  =\frac{w^2\,\dot{\bar n}^{\,2}}{2\,\min\{a(w),b(w)\}^2}\,.
  \label{eq:homodyne-cfi-opt}
\end{equation}
For the squeezed-vacuum preparation~\eqref{eq:squeezed-ab} and $r\ge0$ one has
$a(w)\le b(w)$, so that
\begin{equation}
  \CFI_{\mathrm{hom}}^{\mathrm{opt}}(t)
  =\frac{2\,w^2\,\dot{\bar n}^{\,2}}
        {\bigl[(1-w)e^{-2r}+w(2\bar n+1)\bigr]^2}\,.
  \label{eq:homodyne-cfi-squeezed}
\end{equation}
For every fixed co-rotating quadrature angle $\theta$---and in particular for the
phase-optimised homodyne measurement---$\CFI_{\mathrm{hom}}(t)$ is strictly increasing
for $t>0$.
\end{proposition}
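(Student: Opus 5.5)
Working in the rescaled interaction picture of Lemma~\ref{lem:closed}, I use that the first moments are $T$-independent ($\partial_T\mathbf d_t=\mathbf 0$), so only the covariance part of the Gaussian Fisher information survives. A homodyne measurement of the co-rotating quadrature $q_\theta=q\cos\theta+p\sin\theta$ yields a one-dimensional Gaussian outcome with mean $T$-independent and variance $V_\theta=\mathbf e_\theta^\top\boldsymbol\sigma_t\mathbf e_\theta$, $\mathbf e_\theta=(\cos\theta,\sin\theta)^\top$. With $\boldsymbol\sigma_t=\operatorname{diag}(a,b)$ this gives $V_\theta=a\cos^2\theta+b\sin^2\theta$. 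The Fisher information of $\mathcal N(0,V)$ with respect to $T$ is $(\partial_TV)^2/(2V^2)$, which is exactly the homodyne formula~\eqref{eq:cfi-homo} applied to the rotated quadrature. Since $\partial_T\boldsymbol\sigma_t=w\dot{\bar n}\,\mathds 1$ by~\eqref{eq:cm-affine}, we get $\partial_TV_\theta=w\dot{\bar n}(\cos^2\theta+\sin^2\theta)=w\dot{\bar n}$, which yields~\eqref{eq:homodyne-cfi-theta}. One small point to flag is that the rescaling to $\boldsymbol\sigma_\infty\propto\mathds 1$ and the co-rotating frame are $T$-independent symplectic maps. Quadrature angles therefore map to quadrature angles, and the homodyne family is preserved.

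Optimising over $\theta$ is immediate. The numerator is $\theta$-independent, and $V_\theta$ is a convex combination of $a$ and $b$, so the maximum is at $V=\min\{a,b\}$, giving~\eqref{eq:homodyne-cfi-opt}. For the squeezed preparation~\eqref{eq:squeezed-ab}, $b-a=\tfrac12(1-w)(e^{2r}-e^{-2r})\ge0$, so $\min=a=\tfrac12[(1-w)e^{-2r}+w(2\bar n+1)]$. Substituting this gives~\eqref{eq:homodyne-cfi-squeezed}.

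For monotonicity, I write $V_\theta(w)=\alpha+\beta w$ with $\alpha=V_\theta(0)>0$ (the initial variance) and $\beta$ real. Both $a$ and $b$ are affine in $w$, so the weighted combination is too. Then $\sqrt{2}\,\CFI^{1/2}/\dot{\bar n}=w/(\alpha+\beta w)$, and
\[
\frac{d}{dw}\frac{w}{\alpha+\beta w}=\frac{\alpha}{(\alpha+\beta w)^2}>0 .
\]
Positivity of the denominator on $[0,1]$ holds because $V_\theta(w)>0$ as a variance of a physical state. Since $w/(\alpha+\beta w)$ is nonnegative and increasing, its square is strictly increasing on $(0,1]$. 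Because $w=1-e^{-\gamma t}$ is strictly increasing in $t$, $\CFI^{(\theta)}_{\rm hom}(t)$ is strictly increasing for $t>0$. The same argument, with $\alpha=a(0)=e^{-2r}/2$, covers the optimised case, since for the squeezed state the minimum is attained by the single fixed affine branch $a(w)$ for all $w$.

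The only real subtlety is not the calculus but the frame. The statement restricts to a fixed co-rotating angle, so the lab-frame oscillation at $2\oscR$ is absorbed. In the lab frame, a fixed $\theta$ would see $V$ oscillate and monotonicity could fail. I would make explicit that $\partial_T$ commutes with the $T$-independent rotation, so the homodyne Fisher information in the co-rotating frame is a genuine measurement Fisher information.
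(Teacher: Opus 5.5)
Your proposal is correct and follows the paper's proof essentially step by step: the Gaussian Fisher information $(\partial_T V)^2/(2V^2)$ with $\partial_T V_\theta=w\dot{\bar n}$, minimisation of $V_\theta$ as a convex combination of $a$ and $b$, and monotonicity from $\frac{d}{dw}\,[w/V_\theta(w)]=V_\theta(0)/V_\theta(w)^2>0$ combined with the increasing map $t\mapsto w=1-e^{-\gamma t}$. The differences are only cosmetic: you justify $V_\theta(w)>0$ by physicality rather than by writing $V_\theta(w)=(1-w)V_\theta(0)+w\nu_0$ as a convex combination of positive variances, and you state explicitly two points the paper leaves implicit, namely that the phase-optimised case reduces to the single fixed branch $a(w)$ and that the frame changes are $T$-independent.
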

\begin{proof}
In the interaction picture the homodyne outcome is a centred normal variable with
variance $V_\theta(w)$. Since $\partial_T a(w)=\partial_T b(w)=w\dot{\bar n}$, we have
$\partial_T V_\theta(w)=w\dot{\bar n}$; the Fisher information of a centred normal
distribution with parameter-dependent variance $V$ is $\CFI=(\partial_T V)^2/(2V^2)$
[the single-quadrature case of Eq.~\eqref{eq:cfi-gaussian}; see also~\cite{Cenni2022}],
which gives
Eq.~\eqref{eq:homodyne-cfi-theta}, and Eq.~\eqref{eq:cfi-homo} for the position
quadrature. Optimising over the local-oscillator phase minimises $V_\theta$, hence
$V_\theta=\min\{a,b\}$, proving Eq.~\eqref{eq:homodyne-cfi-opt}; for $r\ge0$ the
squeezed quadrature has the smaller variance, $a(w)\le b(w)$, and
Eq.~\eqref{eq:homodyne-cfi-squeezed} follows by inserting $a(w)$ from
Eq.~\eqref{eq:squeezed-ab}.

For monotonicity, write $V_\theta(w)=(1-w)V_\theta(0)+w\nu_0$ with $\nu_0=\bar n+\tfrac12$.
This is a convex combination of the positive variances $V_\theta(0)$ and $\nu_0$, hence
strictly positive on $[0,1]$, so its reciprocal is regular and
\[
  \CFI_{\mathrm{hom}}^{(\theta)}(w)
  =\frac{\dot{\bar n}^{\,2}}{2}
   \left[\frac{w}{V_\theta(0)+w\bigl(\nu_0-V_\theta(0)\bigr)}\right]^2 .
\]
The bracketed function has derivative
$\tfrac{d}{dw}\,w/[V_\theta(0)+w(\nu_0-V_\theta(0))]
=V_\theta(0)/[V_\theta(0)+w(\nu_0-V_\theta(0))]^2>0$, since $V_\theta(0)>0$; being
non-negative and vanishing only at $w=0$, its square is strictly increasing on
$(0,1]$. As $w(t)=1-e^{-\gamma t}$ is strictly increasing in $t$, so is
$\CFI_{\mathrm{hom}}^{(\theta)}(t)$.
\end{proof}

\begin{proposition}[Markovian heterodyne Fisher information]
\label{prop:markov-heterodyne}
In the same rescaled interaction picture, heterodyne detection projects onto
coherent states, adding to the probe the vacuum seed
$\boldsymbol\sigma^M=\boldsymbol\sigma_{\mathrm{vac}}$ of Eq.~\eqref{eq:cfi-hetero};
the rescaling that isotropises the steady state sends
$\boldsymbol\sigma_{\mathrm{vac}}\mapsto\tfrac12\mathds{1}$. The outcome is therefore
a centred two-dimensional Gaussian with covariance
\[
  \boldsymbol\Sigma_{\mathrm{het}}(w)
  =\boldsymbol\sigma_t+\boldsymbol\sigma_{\mathrm{vac}}
  =\boldsymbol\sigma_t+\tfrac12\mathds{1}
  =\operatorname{diag}\!\bigl(a(w)+\tfrac12,\,b(w)+\tfrac12\bigr),
\]
and its classical Fisher information is
\begin{equation}
  \CFI_{\mathrm{het}}(t)
  =\frac{w^2\,\dot{\bar n}^{\,2}}{2}
   \left[\frac{1}{\bigl(a(w)+\tfrac12\bigr)^2}
        +\frac{1}{\bigl(b(w)+\tfrac12\bigr)^2}\right].
  \label{eq:heterodyne-cfi}
\end{equation}
For the squeezed-vacuum preparation~\eqref{eq:squeezed-ab}, this becomes
\begin{align}
  \CFI_{\mathrm{het}}(t)
  =2\,w^2\,\dot{\bar n}^{\,2}\Bigg[
   &\frac{1}{\bigl[(1-w)e^{-2r}+w(2\bar n+1)+1\bigr]^2} +\frac{1}{\bigl[(1-w)e^{2r}+w(2\bar n+1)+1\bigr]^2}\Bigg].
  \label{eq:heterodyne-cfi-squeezed}
\end{align}
Moreover, $\CFI_{\mathrm{het}}(t)$ is strictly increasing for $t>0$, provided
$\dot{\bar n}\neq0$.
\end{proposition}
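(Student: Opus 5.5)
The plan mirrors the homodyne proof (Proposition~\ref{prop:markov-homodyne}). First, the outcome distribution: in the rescaled interaction picture the first moments are $T$-independent, so $\partial_T\mathbf d=\mathbf 0$ and only the second term of Eq.~\eqref{eq:cfi-gaussian} survives. I take $\boldsymbol\sigma^M=\tfrac12\mathds{1}$. The total covariance is then diagonal, $\boldsymbol\Sigma_{\mathrm{het}}=\operatorname{diag}(a+\tfrac12,b+\tfrac12)$, and its derivative is $\partial_T\boldsymbol\Sigma_{\mathrm{het}}=\partial_T\boldsymbol\sigma_t=w\dot{\bar n}\,\mathds{1}$, using Eq.~\eqref{eq:cm-affine}. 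The matrix $\boldsymbol\Sigma^{-1}\partial_T\boldsymbol\Sigma$ is therefore $w\dot{\bar n}\,\operatorname{diag}\bigl(1/(a+\tfrac12),1/(b+\tfrac12)\bigr)$. Half the trace of its square gives Eq.~\eqref{eq:heterodyne-cfi} directly.

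Second, the squeezed form. Inserting $a(w),b(w)$ from Eq.~\eqref{eq:squeezed-ab} gives
\[
a+\tfrac12=\tfrac12\bigl[(1-w)e^{-2r}+w(2\bar n+1)+1\bigr],
\]
and similarly for $b$ with $e^{2r}$. Pulling out the factors of $\tfrac12$ produces the prefactor $2w^2\dot{\bar n}^2$ in Eq.~\eqref{eq:heterodyne-cfi-squeezed}. This step is only bookkeeping.

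Third, monotonicity, treated term by term. Each summand has the form $\tfrac12\dot{\bar n}^2\,[w/(c_0+w(c_1-c_0))]^2$. Here $c_0=V(0)+\tfrac12>0$ and $c_1=\nu_0+\tfrac12>0$ are the values of the shifted variance at $w=0$ and $w=1$. The shifted variance $(1-w)c_0+wc_1$ is a convex combination of positive numbers, so it is positive on $[0,1]$. As in the homodyne proof, $\frac{d}{dw}\,\frac{w}{c_0+w(c_1-c_0)}=c_0/(\cdot)^2>0$, and this function is non-negative and vanishes only at $w=0$. Its square is therefore strictly increasing on $(0,1]$. A sum of two strictly increasing functions is strictly increasing, and the common factor $\dot{\bar n}^2>0$ keeps the sign (this is where $\dot{\bar n}\neq0$ enters). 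Composing with the strictly increasing $w(t)=1-e^{-\gamma t}$ gives strict increase in $t>0$.

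The main obstacle is not the algebra but justifying the measurement seed. The statement writes $\boldsymbol\sigma^M=\boldsymbol\sigma_{\mathrm{vac}}$, whereas Eq.~\eqref{eq:cfi-hetero} uses $\mathbb I$. I would argue that projection onto coherent states of the renormalized mode has seed equal to the vacuum covariance at frequency $\oscR$. This seed is isotropic and hence invariant under the co-rotating rotation $\mathcal R_{\oscR t}$, so it passes unchanged to the interaction picture. The symplectic rescaling that isotropises $\boldsymbol\sigma_\infty$ then maps it to $\tfrac12\mathds{1}$. Because the classical Fisher information in Eq.~\eqref{eq:cfi-gaussian} is invariant under a common congruence of $\boldsymbol\sigma$, $\boldsymbol\sigma^M$ and $\partial_T\boldsymbol\sigma$, the computation may be done in this frame. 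I would state this invariance explicitly before the computation.
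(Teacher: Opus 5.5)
Your proposal is correct and follows the paper's proof essentially step for step. You use the centred-Gaussian Fisher information $\tfrac12\Tr[(\boldsymbol\Sigma^{-1}\partial_T\boldsymbol\Sigma)^2]$ with $\partial_T\boldsymbol\Sigma_{\mathrm{het}}=w\dot{\bar n}\,\mathds{1}$, then direct substitution of Eq.~\eqref{eq:squeezed-ab}, then term-by-term monotonicity of $[w/(c+wd)]^2$ from the positive convex-combination denominator and $c/(c+wd)^2>0$. Your extra justification of the seed, namely the invariance of the $\oscR$-vacuum under $\mathcal R_{\oscR t}$ plus congruence invariance of the Gaussian Fisher information (which resolves the $\mathbb I$ versus $\boldsymbol\sigma_{\mathrm{vac}}$ mismatch with Eq.~\eqref{eq:cfi-hetero}), is sound and supplies what the paper simply asserts in the statement.
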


\begin{proof}
For a centred multivariate normal distribution with covariance
$\boldsymbol\Sigma(T)$, the Fisher information is the second term of
Eq.~\eqref{eq:cfi-gaussian} (see also~\cite{Cenni2022}),
\[
  \CFI
  =\tfrac12\Tr\!\bigl[(\boldsymbol\Sigma^{-1}\partial_T\boldsymbol\Sigma)^2\bigr].
\]
Here
\[
  \boldsymbol\Sigma_{\mathrm{het}}(w)
  =\operatorname{diag}\!\bigl(a(w)+\tfrac12,\,b(w)+\tfrac12\bigr),
  \qquad
  \partial_T\boldsymbol\Sigma_{\mathrm{het}}(w)=w\,\dot{\bar n}\,\mathds{1},
\]
and substitution gives Eq.~\eqref{eq:heterodyne-cfi};
Eq.~\eqref{eq:heterodyne-cfi-squeezed} follows by inserting
Eq.~\eqref{eq:squeezed-ab}. It remains to check monotonicity. Each term of
Eq.~\eqref{eq:heterodyne-cfi} has the form
\[
  \left[\frac{w}{c+wd}\right]^2,
\]
with $c=a(0)+\tfrac12>0$, $d=\nu_0-a(0)$ (or the same with $a$ replaced by $b)$,
where $\nu_0=\bar n+\tfrac12$. The denominator is a convex combination of positive
numbers,
\[
  c+wd=(1-w)\bigl[a(0)+\tfrac12\bigr]+w\bigl[\nu_0+\tfrac12\bigr]>0
  \qquad(0\le w\le1),
\]
and
\[
  \frac{d}{dw}\,\frac{w}{c+wd}=\frac{c}{(c+wd)^2}>0 .
\]
Thus each term in Eq.~\eqref{eq:heterodyne-cfi} increases strictly with $w$. Since
$w(t)=1-e^{-\gamma t}$ is strictly increasing in $t$, the heterodyne Fisher
information is strictly increasing in time.
\end{proof}

\begin{remark}[Co-rotating and laboratory homodyne phases]
The homodyne monotonicity result refers to a quadrature with fixed angle in the
interaction picture, equivalently to a local-oscillator phase that follows the free
Markovian rotation at frequency $\oscR$. A homodyne measurement at a fixed laboratory
quadrature also probes the trivial rotation of the covariance ellipse, and its Fisher
information need not be monotone pointwise in time. This convention does not affect
either the phase-optimised homodyne Fisher information or the heterodyne Fisher
information.
\end{remark}
\end{document}